\documentclass[runningheads]{llncs}
\usepackage[T1]{fontenc}
\usepackage{stmaryrd}

\usepackage{subcaption}
\usepackage{wrapfig}
\usepackage{algorithm}
\usepackage{algpseudocode}
\algnewcommand\algorithmicswitch{\textbf{switch}}
\algnewcommand\algorithmiccase{\textbf{case}}
\algdef{SE}[SWITCH]{Switch}{EndSwitch}[1]{\algorithmicswitch\ #1\ \textbf{do}}{\algorithmicend\ \algorithmicswitch}
\algdef{SE}[CASE]{Case}{EndCase}[1]{\algorithmiccase\ #1:}{\algorithmicend\ \algorithmiccase}
\usepackage{graphicx}
\usepackage{booktabs, multicol, multirow}
\usepackage{tikz}
\usepackage{amsmath,amssymb}
\usepackage{caption}
\definecolor{satgreen}{RGB}{76,153,76}
\definecolor{violred}{RGB}{204,60,60}
\definecolor{focusblue}{RGB}{50,100,180}
\definecolor{edgegray}{RGB}{140,140,140}
\definecolor{panelbg}{RGB}{248,248,248}
\definecolor{labelgray}{RGB}{80,80,80}

\usepackage{pifont}
\usepackage{array}

\usepackage{hyperref}
\usepackage{color}

\usepackage{amsmath, amssymb}

\usepackage{amsthm}
\usepackage{mathtools}          
\usepackage{bm}                 
\usepackage{xspace}             
\usepackage{xcolor}

\newcommand{\K}{\mathbb{K}}                        
\newcommand{\semiring}[1]{(\mathbb{K},\oplus,\otimes,\ominus,\bot,\top)}

\newcommand{\zero}{\bot}                           
\newcommand{\one}{\top}                            
\newcommand{\Bool}{\mathbb{B}}
\newcommand{\Rinf}{\mathbb{R}^{\infty}}

\newcommand{\Rpos}{\mathbb{R}^{+}}

\newcommand{\MinMax}{(\Rinf,\max,\min,-,{-\infty},{+\infty})}

\newcommand{\nodes}{\mathcal{V}}          
\newcommand{\timevar}{\mathbb{T}}         
\newcommand{\Ne}{\mathbb{N}}
\newcommand{\Be}{\mathbb{B}}
\newcommand{\mas}{\mathcal{M}}            

\newcommand{\agentstate}{\mathbf{s}}      
\newcommand{\jointstate}{\mathbf{X}}      
\newcommand{\Xc}{\mathcal{X}}             

\newcommand{\world}{w}                    
\newcommand{\Wc}{\mathcal{W}}             

\newcommand{\pos}{\mathbf{p}}

\newcommand{\mypara}[1]{\smallskip\noindent\textit{#1.}\enspace}

\newcommand{\graph}{\mathcal{G}}          
\newcommand{\graphs}{\mathbf{G}}         
\newcommand{\edge}{\mathcal{E}}           
\newcommand{\weights}{w}                  
\newcommand{\type}{\tau}            
\newcommand{\settypes}{\mathcal{T}}       

\newcommand{\InGO}[4]{\mathbf{In}_{#1,#2}^{#3,#4}}
\newcommand{\OutGO}[4]{\mathbf{Out}_{#1,#2}^{#3,#4}}
\newcommand{\InGOs}[3]{\mathbf{In}_{#1,#2}^{#3}}
\newcommand{\OutGOs}[3]{\mathbf{Out}_{#1,#2}^{#3}}
\newcommand{\In}{\mathbf{In}}
\newcommand{\Out}{\mathbf{Out}}
\newcommand{\FA}{\mathbf{FA}}
\newcommand{\EX}{\mathbf{EX}}

\newcommand{\Spmod}{\mathcal{S}}                   

\newcommand{\Nbrs}{\mathcal{N}}

\newcommand{\trace}{\sigma}

\newcommand{\lbl}{\nu}                             
\newcommand{\robt}[4]{\rho_{#1}(#2,#3,#4)}        

\newcommand{\Untilbd}[2]{\,\mathbf{U}_{[#1,#2]}\,}
\newcommand{\Glob}{\mathbf{G}}
\newcommand{\Globbd}[2]{\mathbf{G}_{[#1,#2]}}
\newcommand{\Ev}{\mathbf{F}}
\newcommand{\Evbd}[2]{\mathbf{F}_{[#1,#2]}}

\begin{document}
\title{An Algebraic Framework for Quantitative Semantics of Spatio-Temporal Logic with Graph Operators}
\titlerunning{Quantitative Semantics for STL-GO}
\author{Sheryl Paul\inst{1}\thanks{Equal contribution.} \and
Vidisha Kudalkar\inst{1}$^*$ \and
Anand Balakrishnan\inst{2}$^*$ \and
Tianhao Wu\inst{1} \and
Lars Lindemann\inst{3} \and
Jyotirmoy V. Deshmukh\inst{1}}
\authorrunning{Paul, Kudalkar \& Balakrishnan et.al.}
\institute{University of Southern California, Los Angeles, CA, USA\\
\email{\{sherylpa,kudalkar,wutianha,jdeshmuk\}@usc.edu} \and
University of Texas at Austin, Austin, TX, USA\\
\email{anandbal@utexas.edu} \and
ETH Z\"urich, Z\"urich, Switzerland\\
\email{llindemann@ethz.ch}}
%
\maketitle              %
\begin{abstract}
Spatio-Temporal Logic with Graph Operators (STL-GO) extends Signal Temporal Logic (STL) to multi-agent systems via graph operators that count neighboring agents satisfying a property, together with multi-agent quantifiers. 
While Boolean semantics for STL-GO are well-defined,
quantitative semantics have not yet been developed and 
existing quantitative semantics for spatio-temporal logics such
as STREL cannot capture the counting constraints in STL-GO's graph operators. We develop quantitative semantics for STL-GO as a layered algebraic
construction that separates temporal aggregation from graph-operator aggregation (governed
by an abstract accumulator with a monotone fold and readout).
We prove that soundness and completeness
reduce to monotonicity conditions on these components. We implement the framework and evaluate it on two multi-agent
environments: a 2D bounded region with stochastic
Dubins-car dynamics and a 3D Earth-satellite system,  under
four semantic instantiations (Boolean, min-max,
signed-deficit, and a hybrid), demonstrating the tradeoffs
between accumulator choices and reporting scalability in the
number of agents and time horizon.
\keywords{
  Signal Temporal Logic \and
  multi-agent systems \and
  robustness semantics \and
  graph operators \and
  semirings
}
\end{abstract}
\section{Introduction}
Signal Temporal Logic~\cite{stl-dejan} is widely used to describe objectives in many 
applications like in cyber-physical systems (CPS).
Its applications span robot task
objectives~\cite{silano2025stl,karagulle2025wstl}, control
synthesis~\cite{raman2014model}, 
scenario generation~\cite{kudalkar2024sampling}
and CPS monitoring~\cite{chen2022stl}. 
A key enabler of this success is the notion of
\emph{quantitative semantics}:  which assign a real-valued robustness score to each trajectory, based on \textit{how well} the trajectory satisfies the specification. A substantial body of work has explored such semantics, including classical robustness measures~\cite{fainekos2009robustness}, spatial and temporal robustness~\cite{donze2010robust,akazaki2015avstl,rodionova2016temporal}, stochastic robustness distributions~\cite{bartocci2013robustness}, and smooth approximations~\cite{mehdipour2019arithmetic,haghighi2019smooth,pant2017smooth}. These have enabled a wide range of applications, including robustness-aware monitoring~\cite{jaksic2018quantitative,Nenzi18_SSTL}, falsification~\cite{waga2020falsification}, model predictive control~\cite{raman2014model}, reinforcement learning~\cite{li2017reinforcement,erpo,formats_2024,anand_rl_1,anand_rl_2}, and signal classification~\cite{yoo2017rich}.

\noindent Modern systems, however, are generally multi-agent. Applications such as robot swarms, networked CPS, distributed sensing, and social systems require reasoning not only about temporal evolution but also about interactions between agents, naturally captured by time-varying graphs encoding communication, sensing, or influence relationships. Hence, specifying multi-agent behavior requires combining temporal logic with relational and spatial reasoning over graphs.

\noindent Several spatio-temporal logics have been proposed to address
this need, including
SSTL~\cite{Nenzi18_SSTL},
SpaTeL~\cite{spatel},
SaSTL~\cite{sastl},
STREL~\cite{bartocci2017monitoring,nenzi2022logic},
STL-GO~\cite{stlgo}, and
HyperLTL~\cite{finkbeiner2016first}.
These logics have enabled monitoring~\cite{balakrishnan2025hscc,Nenzi18_SSTL} and control synthesis~\cite{kudalkar2026sampling,alsalehi2021neural} of spatio-temporal cyber-physical systems.
 These logics differ in their treatment of spatial structure, expressiveness of quantification, and ability to handle dynamic relations. While Boolean semantics for these logics are generally well understood, quantitative semantics are less developed. Notably, STREL admits an elegant robustness interpretation based on min-max algebra~\cite{balakrishnan2025hscc}, while other frameworks either lack quantitative semantics or rely on constructions that are difficult to interpret or extend.
Among these, of particular interest to us is STL-GO, which significantly extends the expressiveness of prior frameworks by offering support for directed graph relations, multiple simultaneous graph layers, count-based constraints over neighbors, and quantification over both graphs and agents. However, this expressiveness introduces new challenges for quantitative semantics. In particular, graph operators in STL-GO require aggregating values over sets of neighbors subject to cardinality constraints. Such aggregation cannot be captured by standard semiring operations (e.g., min or max) alone. 

\noindent To illustrate the challenges, consider a team of agents
subject to the requirement that each agent must have at least
3 neighbors within communication range~$R$ satisfying a
property~$\varphi$ (Fig.~\ref{fig:motivation}).
In STL-GO, this is expressed as
$\mathbf{FA}_\mathcal{V}\;
\mathbf{In}^{\exists}_{\mathcal{G},[3,\infty)}\;\varphi$.
Plain STL lacks graph operators and multi-agent quantifiers;
expressing this would require expanding the counting
constraint into a conjunction over agents of disjunctions: \(
\bigwedge_{i \in V} \ \bigvee_{\{j_1,j_2,j_3\} \subseteq V \setminus \{i\}} 
\ \bigwedge_{k=1}^{3} \Big( \|x^{j_k} - x^{i}\| \le R \ \wedge \ \varphi(x^{j_k}) \Big),
\)
across all $\binom{N-1}{3}$ neighbor subsets with explicit
distance predicates.
Applying min-max robustness to this expansion yields nested
$\min$/$\max$ operations that collapse to the margin of the
3rd-best neighbor of the worst-off agent, with the following 
limitations:
(i)~\emph{Count-margin insensitivity}: whether a drone has exactly
three qualifying neighbors or five, the robustness is identical
so long as the third-best neighbor's margin is the same, even
though the latter configuration is far more resilient.
(ii)~\emph{Edge-margin insensitivity}: a drone just outside
communication range that robustly satisfies the
predicate is treated similarly to a drone that might be far away from the range, regardless
of how close it is to qualifying as a neighbor.
(iii)~\emph{Worst-case dominance}: at the system level, the
all-agent quantifier assigns robustness based on the single worst
agent, discarding the fact that the vast majority may be
performing well.
Moreover, STREL cannot express the counting requirement at all;
its spatial operators reason about path existence, not neighbor
counts.
{\setlength{\textfloatsep}{4pt plus 1pt minus 1pt}%
\captionsetup{belowskip=0pt}%
\begin{figure}[t]
  \centering
  \includegraphics[width=\textwidth]{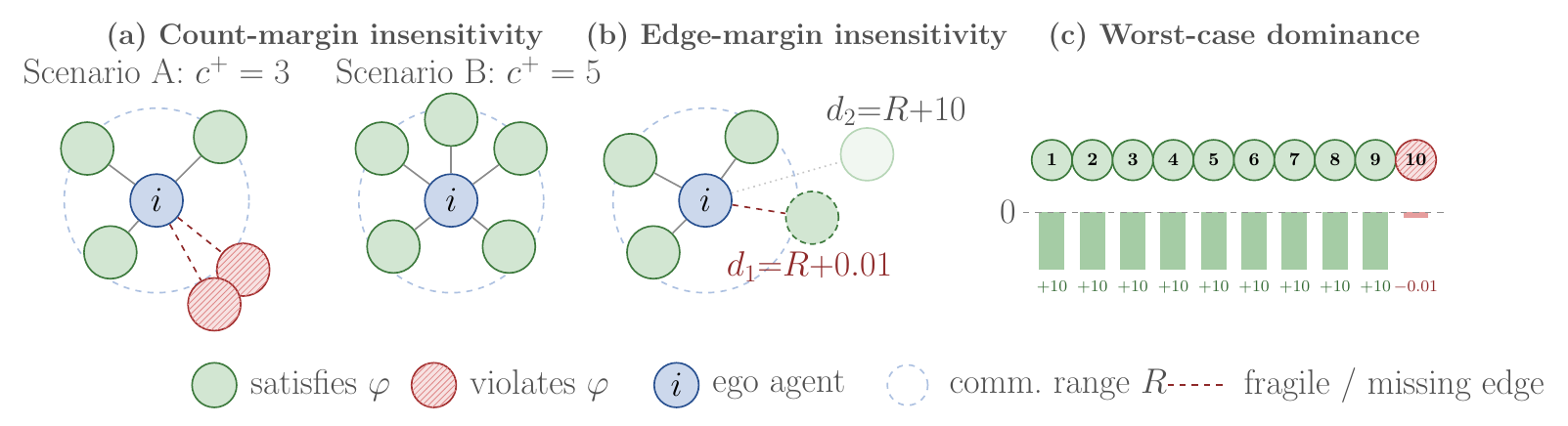}
\caption{\footnotesize Limitations of existing spatio-temporal robustness
    semantics.
    \textbf{(a)}~Both scenarios satisfy ``at least 3 neighbors
    satisfy $\varphi$,'' but A has zero spare neighbors and B has
    two; min-max robustness assigns both the same value.
    STREL cannot express this counting requirement.
    \textbf{(b)}~The dashed-border agent satisfies $\varphi$ but
    lies just outside range ($d_1{=}R{+}0.01$); the faded agent is
    far beyond $d_2 = R+10$.  Both are treated the same way by the robustness semantics.
    \textbf{(c)}~The all-agent STL-GO quantifier $\mathbf{FA}_\mathcal{V}$
    assigns robustness according to the worst agent ($-0.01$), discarding
    9 agents at $+10$.}
  \label{fig:motivation}
\end{figure}
These limitations motivate the development of richer quantitative
semantics for STL-GO that separately account for individual
signal margins, the number of satisfying neighbors, and the
robustness of the graph structure itself.

\noindent \textbf{Contributions.} 
Our contributions are:
(i)~a layered algebraic semantics for STL-GO that separates
temporal, graph, and multi-agent operators into distinct
algebraic components, with temporal and Boolean operators
functioning as in standard STL robustness;
(ii)~an abstract accumulator for graph operators that factors
into a monoid fold and a readout map, isolating
counting-specific aggregation and admitting product spaces
that propagate robustness values alongside qualifying agent
sets;
(iii)~multi-agent semantics in which agent quantifiers
aggregate over an agent-indexed valuation space via semiring
operations, completing the semantics for full STL-GO;
(iv)~soundness and completeness results showing that
monotone folds and readouts preserve threshold semantics,
with compositionality across all layers by structural
induction;
and (v)~experimental evaluation of four semantic
instantiations (Boolean, min-max, signed-deficit, and hybrid)
on two multi-agent environments, demonstrating accumulator
tradeoffs and scalability in agents and time horizon.
\section{Preliminaries}
We first describe the multi-agent system and the spatio-temporal logic STL-GO.
\subsection{Multi-Agent System}
We consider a multi-agent system consisting of a finite set of agents denoted by $\nodes = \{1, \dots, N\}.$
 Each agent $i \in \nodes$ has a state space $\Xc$, and the joint system
state at time $t$ is denoted by
\(
\jointstate_t = (x_t^1, \dots, x_t^N) \in \Xc^{|\nodes|},
\)
where $x_t^i \in \Xc$ is the state of agent $i$ at time $t$. In addition to agent states, the system may include an external
environment or world state $\world_t \in \Wc$. Here, $\Wc$ represents the environment space. Agent interactions are modeled by a collection of time-varying,
weighted directed graphs
\(
\graphs_t = \{\graph_t^{\type} \mid \type \in \settypes\},
\)
where $\settypes$ denotes the set of all graph types, and each graph $\graph_t^{\type} = (\nodes, \edge_t^{\type},
\weights_t^{\type})$ consists of: (i) a set of directed edges $\edge_t^{\type} \subseteq \nodes \times \nodes$, (ii)  a weight function $\weights_t^{\type} : \edge_t^{\type} \to \mathbb{R}$. Each graph type $\type \in \settypes$ represents a distinct interaction
modality. A \emph{multi-agent execution trace} is a sequence
\(
\mas = \{(\jointstate_0, \world_0, \graphs_0),
(\jointstate_1, \world_1, \graphs_1), \dots \}
\)
over discrete time steps $t \in \mathbb{N}$.
We write $(\mas,i,t)$ to denote evaluation of agent-local formulas at
agent $i \in \nodes$ and time $t$, and $(\mas,t)$ for multi-agent
formulas evaluated over the joint system at time $t$.
We write $(\mas,i,t) \models \varphi$ to denote that an agent-local formula $\varphi$ is \emph{Boolean satisfied} at agent $i$ and time $t$ and $(\mas,t) \models \phi$ to denote that a multi-agent formula $\phi$ is \emph{satisfied} over the system.\footnote{We only consider traces over discrete-time in this work.}
 \subsection{Signal Temporal Logic with Graph Operators (STL-GO)}
\label{sec:stlgo}
\emph{Spatio-Temporal Logic with Graph Operators} (STL-GO)~\cite{stlgo}
extends STL~\cite{stl-dejan} to reason simultaneously about
spatio-temporal agent behavior and the \emph{topological} relationships
between agents encoded in interaction graphs.
Its syntax and semantics are stratified into agent-local formulas $\varphi$
and multi-agent formulas $\phi$.

\noindent\textbf{Agent-Local Formulas.}
The syntax for agent-local STL-GO formulas is:
\[
  \varphi \;::=\; \top \;\mid\; \mu_x \;\mid\; \neg\varphi \;\mid\;
  \varphi \wedge \varphi \;\mid\;
  \varphi\,\Untilbd{t_1}{t_2}\,\varphi \;\mid\;
  \InGO{\graph}{E}{W}{\#}\varphi \;\mid\;
  \OutGO{\graph}{E}{W}{\#}\varphi,
\]
where $\mu_x : \Xc \to \Be$ is an atomic predicate over a single agent's
state;
$\neg$ and $\wedge$ are the standard Boolean connectives;
$\Untilbd{t_1}{t_2}$ is the bounded Until operator from STL; and
$\InGO{\graph}{E}{W}{\#}$, $\OutGO{\graph}{E}{W}{\#}$ are the \emph{incoming}
and \emph{outgoing graph operators} introduced by STL-GO, where $W =
[w_1, w_2] \subseteq \mathbb{R}^{\infty}$ is an edge-weight interval, $E = [e_1,
e_2]$ with $e_1 \in \Ne,\ e_2 \in \Ne \cup \{+\infty\}$ constrains the count of qualifying edges, and $\# \in
\{\exists, \forall\}$ quantifies over the graph collection $\graphs_t$.
The semantics $(\mas, i, t) \models \varphi$ are defined inductively;
Boolean and temporal clauses follow STL:
\begin{align*}
  (\mas,i,t) &\models \mu_x
    &&\text{iff } \mu_x(\agentstate^i_t), \\
  (\mas,i,t) &\models \neg\varphi
    &&\text{iff } (\mas,i,t) \not\models \varphi, \\
  (\mas,i,t) &\models \varphi_1 \wedge \varphi_2
    &&\text{iff } (\mas,i,t)\models\varphi_1
       \text{ and } (\mas,i,t)\models\varphi_2, \\
  (\mas,i,t) &\models \varphi_1\,\Untilbd{t_1}{t_2}\,\varphi_2
    &&\text{iff } \exists t' \in t\oplus[t_1,t_2].\;
       (\mas,i,t')\models\varphi_2 \\
    &&&\phantom{\text{iff }}\land\;
       \forall t'' \in [t,t'].\; (\mas,i,t'')\models\varphi_1.
\end{align*}
The graph operators count the neighbors of agent $i$ that satisfy 
$\varphi$ and lie within the weight interval $W$:
\begin{align*}
  \scalebox{0.92}{$\displaystyle
  (\mas,i,t) \models \InGO{\graph}{E}{W}{\#}\varphi
  \quad\text{iff}\quad
  \#\,\graph^{\type} \in \graphs_t.\;
  \Bigl|\bigl\{(j,i)\in\edge_t^{\type}
    \mid \weights_t^{\type}(j,i)\in W,\;
    (\mas,j,t)\models\varphi\bigr\}\Bigr| \in E
  $}
\end{align*}
The $\Out$ operator is similarly defined except using $(i,j)\in\edge_t^{\type}$ instead.
Here $\# = \exists$ requires the count condition to hold in \emph{some}
graph $\graph^{\type} \in \graphs_t$; $\# = \forall$ requires it in
\emph{every} graph.
When edge weights are unconstrained, we abbreviate
$\InGOs{\graph}{E}{\#}\varphi$ and $\OutGOs{\graph}{E}{\#}\varphi$
(setting $W = (-\infty,\infty)$).

\noindent \textbf{Multi-Agent Formulas.}
Multi-agent STL-GO formulas $\phi$ compose agent-local formulas and
predicates over the joint system state:
\[
  \phi \;::=\; \top \;\mid\; \mu \;\mid\; i.\varphi \;\mid\;
  \neg\phi \;\mid\; \phi\wedge\phi \;\mid\;
  \phi\,\Untilbd{t_1}{t_2}\,\phi,
\]
where $\mu : \Xc^{|\nodes|} \times \Wc \to \Be$ is an atomic predicate
over the joint agent and world state, and $i.\varphi$ embeds the
agent-local formula $\varphi$ for agent $i$ into the multi-agent context.
The semantics $(\mas, t) \models \phi$ follow the same structure as the
agent-local case, with:
\(
  (\mas,t) \models \mu
    \text{ iff } \mu(\jointstate_t, \world_t), \text{ and }
  (\mas,t) \models i.\varphi
    \text{ iff } (\mas,i,t)\models\varphi.
\)
All remaining clauses ($\neg$, $\wedge$, $\Untilbd{t_1}{t_2}$) are
identical in structure to the agent-local case.
The derived operators $\Evbd{t_1}{t_2}$ and $\Globbd{t_1}{t_2}$
follow from Until as in STL.  For convenience, the agent-universal and
agent-existential shorthands are:
\(
  \FA_\nodes\,\varphi \;\triangleq\; \bigwedge_{i\in\nodes} i.\varphi,
  \text{ and }
  \EX_\nodes\,\varphi \;\triangleq\; \bigvee_{i\in\nodes} i.\varphi.
\)

\noindent The semantics of STL-GO presented so far assume the standard Boolean 
satisfaction semantics; i.e., for a given agent-centric STL-GO formula 
$\varphi$, or for a multi-agent STL-GO formula $\phi$, a given multi-agent
execution trace either satisfies the given formulas or violates them.
{\em Quantitative semantics} generalize the notion of Boolean semantics by
defining a function that maps a formula and a multi-agent
trace to some (signed) number that indicates a quantitative degree of satisfaction. Many different quantitative
semantics for logics such as STL and STREL have been defined
in the literature \cite{fainekos2009robustness,donze2010robust,rodionova2016temporal,balakrishnan2025hscc}. Of particular relevance to this paper are generalization of quantitative semantics as operations defined using suitable algebras. We also define the quantitative semantics of STL-GO more generally using specific algebraic operations. Before we define them in the next section, we review some key ideas from algebra.

\subsection{Background on Algebraic Structures}
\begin{definition}[Monoid]
A \emph{monoid} is a triple $(M,\oplus,e)$ consisting of a set $M$, a binary operation
\(\oplus : M \times M \to M,\)
and an element $e \in M$, such that for all $a,b,c \in M$:
    \(
    (a \oplus b)\oplus c = a \oplus (b \oplus c)
    \)
     and
    \(
    a \oplus e = e \oplus a = a
    \), i.e., all the elements satisfy the property of associativity and $e$ acts as an identity element.
Further, a \emph{commutative monoid} is a monoid $(M,\oplus,e)$ that additionally satisfies, for all $a,b \in M$:
\(a \oplus b = b \oplus a.
\)
\label{def:monoid}
\end{definition}
\begin{definition}[Semiring~\cite{golan1999semirings,kuich1986semirings}]
\label{def:semiring}
A tuple $\K = (K, \oplus, \otimes, \zero, \one)$ is a \emph{semiring}
with underlying set $K$ if
$(K, \oplus, \zero)$ is a commutative monoid with identity $\zero$;
$(K, \otimes, \one)$ is a monoid with identity $\one$;
$\otimes$ distributes over $\oplus$;
and $\zero$ is an annihilator for $\otimes$
(i.e.\ $k \otimes \zero = \zero \otimes k = \zero$ for all $k \in K$).
 \end{definition}
\noindent A semiring $\K$ is \emph{commutative} if $\otimes$ is also commutative,
and \emph{simple} if $k \oplus \one = \one$ for all $k \in K$.
It is \emph{additively} (resp.\ \emph{multiplicatively}) \emph{idempotent}
if $k \oplus k = k$ (resp.\ $k \otimes k = k$) for all $k \in K$.
A semiring that is commutative, additively idempotent, and simple is
called a \emph{constraint semiring} (c-semiring).
A c-semiring that is also multiplicatively idempotent induces a
bounded distributive lattice, with $\one$ as supremum and $\zero$ as infimum.
\begin{example}
Consider the semiring $(\mathbb{R} \cup \{-\infty, +\infty\}, \max, \min, -\infty, +\infty)$.
Then $(\mathbb{R}, \max, -\infty)$ is a commutative monoid with identity $-\infty$,
and $(\mathbb{R}, \min, +\infty)$ is a monoid with identity $+\infty$.
By the distributivity of $\min$  over $\max$,
we have $\min(a,\max(b,c))$ $= \max(\min(a,b), \min(a,c))$.
The element $-\infty$ is an annihilator for $\min$, since
\(
\min(a,-\infty) = -\infty.
\)
Both operations are idempotent, i.e., $\max(a,a)=a$ and $\min(a,a)=a$.
The semiring is commutative, and simple since $\max(k,+\infty)=+\infty$
for all $k \in K$.
\end{example}
\noindent When considering a unary operator, \(\ominus: K \to K\), on a bounded distributive lattice \(K\), 
these operators give rise to a special case of a useful algebra:
\begin{definition}[De Morgan Algebra~\cite{cignoli1983dualities}]
\label{def:deMorgan}
A \emph{De Morgan algebra} is a structure
$(K, \oplus, \otimes, \ominus, \zero, \one)$ such that
$(K, \oplus, \otimes, \zero, \one)$ is a bounded distributive lattice and
$\ominus$ is a negation on $K$ satisfying, for all $a, b \in K$: $\ominus(a \oplus b) = \ominus a \otimes \ominus b, \ 
  \ominus(a \otimes b) = \ominus a \oplus \ominus b,  \ 
  \ominus\ominus a = a.$
\end{definition}
\noindent The two De Morgan algebras of primary interest here are:
(i) The \emph{Boolean algebra} $(\Bool, \vee, \wedge, \neg, \mathbf{0}, \mathbf{1})$,
        giving qualitative (Boolean) satisfaction semantics.
(ii) The \emph{min-max algebra} $(\Rinf, \max, \min, -, {-\infty}, {+\infty})$
        over the extended reals, which recovers the quantitative (robustness)
        semantics of STL~\cite{fainekos2009robustness}.
\noindent Since we are interested in the semantics of multi-agent systems,
we will consider how such semirings can be extended to multiple components, and
how we can generalize operations on multiple semirings.
Let semirings be $\{(K_\ell,\oplus_\ell,\otimes_\ell,\zero_\ell,\one_\ell)\}_{\ell=1}^m$ and their product semiring is $K := \prod_{\ell=1}^m K_\ell$, with
componentwise operations:
$ (x \oplus y)_\ell := x_\ell \oplus_\ell y_\ell, \ (x \otimes y)_\ell := x_\ell
\otimes_\ell y_\ell$
and identities:
\( \zero := (\zero_\ell)_\ell, \  \one := (\one_\ell)_\ell. \)
From this, we can algebraically define the notion of a general \emph{commutative
monoid} operating on this product semiring as a \emph{semimodule}.
\begin{definition}[Semimodule]
Let $(K,\oplus,\otimes,\zero,\one)$ be a semiring.
A semimodule over $K$ is a commutative monoid $(S,\oplus_S,\zero_S)$ together with a scalar multiplication
$ \otimes : K \times S \to S$,
such that for all $k,k' \in K$ and $s,s' \in S$:
\begin{align*}
    k \otimes (s \oplus_S s') &= (k \otimes s) \oplus_S (k \otimes s'); \quad &&(k \oplus k') \otimes s = (k \otimes s) \oplus_S (k' \otimes s) \\
    (k \otimes k') \otimes s &= k \otimes (k' \otimes s); \quad &&\one \otimes s = s; \quad k \otimes \zero_S = \zero_S 
\end{align*}
\end{definition}

\begin{definition}[$\K$-Labeled Trace]
\label{def:trace}
Let $(K, \oplus, \otimes, \ominus, \zero, \one)$ be a De Morgan
algebra and $L$ a space universe.  A \emph{$\K$-labeled trace} is a
pair $(\trace, \lbl)$ where $\trace = \Spmod_0\Spmod_1\cdots$ is a
finite sequence of spatial models and
$\lbl(\Spmod_i, l, \mu) \in K$ is the value of atomic predicate $\mu$
at location $l \in L$ in model $\Spmod_i$.
\end{definition}

\section{Layered Algebraic Semantics for STL-GO}
In addition to the operators introduced by STL, STL-GO also introduces \emph{graph operators} and \emph{multi-agent operators} which require mechanisms beyond those required for defining standard STL robustness. While the latter includes only Boolean and temporal operators that can be governed by a single algebraic structure such as a De Morgan algebra or semiring, STL-GO requires multiple forms of algebraic structures. To capture this uniformly, we present STL-GO semantics as a
layered algebraic construction.
We begin with the temporal semantics, which provide the base algebraic
structure.
\subsection{Semirings for Temporal Operators}
We begin with ego-centric temporal operators for STL-GO which generalize robustness semantics of STL. 
Let $(K,\oplus,\otimes,\ominus,\bot,\top)$
be an algebra,
where
 $(K,\oplus,\bot)$ is a commutative monoid,
$(K,\otimes,\top)$ is a monoid,
$\ominus$ is an involutive negation satisfying De Morgan laws
when present.
The set $K$ represents the space of semantic values assigned to formulas.

\noindent\textbf{Ego-centric boolean and temporal semantics.} To evaluate a formula $\varphi$
for agent~$i$ at time~$t$, the ego-centric semantics $\robt{i}{\trace}{\varphi}{t} \in K$ can be defined as:
\begin{align*}
  \robt{i}{\trace}{\mu}{t}
    &= \lbl(\Spmod_t, i, \mu), \quad
  \robt{i}{\trace}{\varphi_1 \wedge \varphi_2}{t}
    = \robt{i}{\trace}{\varphi_1}{t}
       \otimes
       \robt{i}{\trace}{\varphi_2}{t}, \notag\\
  \robt{i}{\trace}{\neg\varphi}{t}
    &= \ominus\,\robt{i}{\trace}{\varphi}{t}, \quad
  \robt{i}{\trace}{\varphi_1 \vee \varphi_2}{t}
    = \robt{i}{\trace}{\varphi_1}{t}
       \oplus
       \robt{i}{\trace}{\varphi_2}{t}, \notag\\
  \robt{i}{\trace}{\varphi_1 &\Untilbd{t_1}{t_2} \varphi_2}{t}
    =
    \bigoplus_{t' \in t+[t_1,t_2]}
    \biggl(
      \robt{i}{\trace}{\varphi_2}{t'}
      \otimes
      \bigotimes_{t'' \in [t,\,t']}
      \robt{i}{\trace}{\varphi_1}{t''}
    \biggr)
  \label{eq:stlgo-temporal}
\end{align*}
Derived operators such as $\Ev$ and $\Glob$ follow in the usual way.
\noindent \begin{example}
    Different choices of $(K,\oplus,\otimes)$ yield different
interpretations: \\ \textit{(i) Boolean Semantics:}  $K=\{0,1\}$, $\oplus=\vee$, $\otimes=\wedge$ recovers the classical qualitative satisfaction.  \textit{(ii) Min-max Semantics:} $K=\Rinf$, $\oplus=\max$, $\otimes=\min$ recovers the standard STL robustness, where values represent
    perturbation margins.
\end{example}

\noindent Recall the definition of product semirings. This construction allows formulas to simultaneously produce multiple semantic quantities. E.g., we may combine a robustness value with quantities such as counts, or relational data. The temporal semantics may be extended componentwise to this setting, ensuring consistency across outputs. This multi-output structure will be essential for defining graph operators which we describe below.
\subsection{Multi-agent and Graph Operators}
\label{subsec:graph-operators}
\textbf{Graph Operators. }
We now define the semantics of graph operators. Unlike temporal
operators, which are induced directly by the semiring structure on $K$,
graph operators require aggregation over dynamically defined
neighborhoods of agents. This aggregation is not, in general, expressible
purely in terms of the semiring operations $(\oplus,\otimes)$.

\noindent Let $\nodes$ denote the set of agents, and define $S := K^{\nodes}$.
Each element $s \in S$ assigns to every agent $i \in \nodes$
a semantic value $s(i) \in K$. The space $S$ inherits algebraic structure from $K$:
 $(S,\oplus,\bot)$ is a commutative monoid under addition: $(s \oplus s')(i) := s(i) \oplus s'(i)$,
 $S$ is a semimodule over $K$ with scalar multiplication defined: $(k \otimes s)(i) := k \otimes s(i)$. For a fixed trace $\trace$, formula $\varphi$, and time $t$, the
ego-centric semantics induce an element
$s_{\varphi,t} \in S, \ 
s_{\varphi,t}(i) := \robt{i}{\trace}{\varphi}{t}$,
which collects the local evaluations across all agents.

\mypara{Eligible Neighbors}
For an agent $i \in \nodes$, time $t$, graph type
$\type \in \settypes$, and weight interval $W$, we define the sets of eligible neighbors:
$  \Nbrs_{i}^{\type,\In}(t,W)
  =
  \{\, j \in \nodes
    \mid (j,i)\in \edge_t^{\type},\;
          \weights_t^{\type}(j,i)\in W \,\}$,
$\Nbrs_{i}^{\type,\Out}(t,W)$ is defined similarly except with $(i,j)\in \edge_t^{\type}$. The graph operators act by restricting the global valuation $s_{\varphi,t} \in S$ to the neighborhood of $i$, yielding a multiset of values: \( \{ s_{\varphi,t}(j) \mid j \in \Nbrs_{i}^{\type, \circ} (t, W)\} \subseteq K\) where $\circ \in \{\In, \Out\}$.

\mypara{Aggregation Monoid}
To combine neighbor values we introduce an aggregation space $(M,\boxplus,e)$, a commutative monoid. It serves as a domain for aggregating multisets of semantic values. Choices may include: $M = K$ with $\boxplus = \oplus$ (idempotent reduction),  $M = \mathbb{Z}$ with $\boxplus = +$ (counting),  $M = \mathbb{R}$ with $\boxplus = +$ (averaging), product spaces $M = K \times M'$ for hybrid semantics.

\noindent Let $\mathcal{A}_E : K^{*} \to M$ be a fold map that aggregates a multiset of neighbor values subject to the count constraint $E$, where $K^{*}$ denotes the set of finite multisets over $K$.
Then for each graph type $\type$ define: 
$\mathcal{A}_E\!\left(
  \{\, s_{\varphi,t}(j)
  \mid j \in \Nbrs_{i}^{\type,\circ}(t,W)\,\}
\right)
\;\in\; M$ with $\circ \in \{\In, \Out\}$.
This folding operation summarizes the multiset of neighbor values into a
single aggregate element of $M$. Different choices of $\mathcal{A}_E$
induce different interpretations of the count constraint.

\mypara{Readout} To obtain a value compatible with the temporal semantics, we map the
aggregated result back into $K$ via a readout map $h : M \to K$. This yields, for each graph type $\type$:
\(
h\!\left(
\mathcal{A}_E\!\left(
  \{\, s_{\varphi,t}(j)
  \mid j \in \Nbrs_{i}^{\type,\circ}(t,W)\,\}
\right)
\right)
\;\in\; K
\) with $\circ \in \{\In, \Out\}$.

\noindent \textit{Quantification over Graphs. }
Finally, we aggregate across graph types using semiring operations:
\(\mathcal{Q}_{\exists}^K = \bigoplus,
\text{ and } 
\mathcal{Q}_{\forall}^K = \bigotimes.
\) The graph operators are then defined as:
\(
    \robt{i}{\trace}{\InGO{\graph}{E}{W}{\#}\varphi}{t}
  =
  \mathcal{Q}_\#^K
  \Bigl(
    h\!\left(
      \mathcal{A}_E
      \bigl(
        \{\, s_{\varphi,t}(j)
        \mid j \in \Nbrs_{i}^{\type,\In }(t,W)\,\}\bigr)
    \right)
  \Bigr)_{\type\in\settypes}\)
and similarly for the $\Out$ operator.
For a fixed trace $\trace$, formula $\varphi$, time $t$, and quantifier 
$\# \in \{\exists, \forall\}$, we define the \emph{induced local graph operator}
$T_i^{\#} : S \to K$ by
\(
T_i^{\#}(s) \;=\; \mathcal{Q}_{\#}^{K}\!\Big(\, 
   h\big(\mathcal{A}_E(\{\, s(j) \mid j \in \Nbrs_i^{\type,\circ}(t, W) \,\})\big)
\,\Big)_{\type \in \settypes}.
\)
The graph operator semantics are then 
$\robt{i}{\trace}{\InGO{\graph}{E}{W}{\#}\varphi}{t} = T_i^{\#}(s_{\varphi,t})$, 
and analogously for $\Out$.

\noindent In this setting, graph aggregation is fully internal to the semiring,
and the resulting operator preserves the algebraic structure of $K$.

\noindent \textbf{Multi-agent aggregation and quantifiers.}
In addition to neighborhood-based aggregation, STL-GO supports
aggregation over the entire agent set. This yields the semantics of
multi-agent quantifiers. Let $s_{\varphi,t} \in S = K^{\nodes}$ denote the agent-indexed
valuation induced by a formula $\varphi$ at time $t$, i.e.,
\( 
s_{\varphi,t}(i) := \robt{i}{\trace}{\varphi}{t}.
\) We define a global aggregation map
$\mathsf{Acc} : S \to K$,
which combines the values across all agents.
The existential and universal quantifiers are obtained as special cases:
\(
\mathsf{Acc}_{\oplus}(s) = \bigoplus_{i \in \nodes} s(i),
\ 
\mathsf{Acc}_{\otimes}(s) = \bigotimes_{i \in \nodes} s(i).
\)
The semantics of multi-agent quantifiers are then: 
\(
  \robt{}{\trace}{\EX_{\nodes}\,\varphi}{t}
  = \mathsf{Acc}_{\oplus}(s_{\varphi,t}),
  \quad
  \robt{}{\trace}{\FA_{\nodes}\,\varphi}{t}
  = \mathsf{Acc}_{\otimes}(s_{\varphi,t}).
\)
More generally, $\mathsf{Acc}$ may be instantiated as an arbitrary
aggregation map, potentially factoring through an intermediate space
$M_A$ as in the graph-operator construction:
\(
S \;\to\; M_A \;\to\; K.
\)

\subsection{Spatio-temporal Algebras}
The construction of Section~\ref{subsec:graph-operators} separates
aggregation into an intermediate space $M$ and a readout map
$h : M \to K$. This enables the use of richer aggregation domains
that carry additional structural information beyond scalar robustness.
We now show how to construct such domains via product
algebras, yielding \emph{`spatio-temporal implicants'} that generalize
classical temporal semantics.

\mypara{Product aggregation spaces}
Let $(K,\oplus,\otimes)$ be the base semiring used for temporal
semantics. We extend the aggregation space by introducing an auxiliary
structure that tracks the \emph{origin} of satisfaction.
Let $R := \mathcal{P}(\nodes \times \timevar)$ where $\timevar$ denotes the time domain. Elements of $R$ represent sets of agent-time pairs witnessing the satisfaction of a formula. We define the product aggregation space $M := K \times R$ with componentwise aggregation $
(r_1, R_1) \;\boxplus\; (r_2, R_2)
\;:=\;
(r_1 \oplus r_2,\; R_1 \cup R_2)$.
Thus, $M$ simultaneously carries:
(i) a scalar semantic value $r \in K$, (ii) a  relational component $R \subseteq \nodes \times \timevar$.
    
\mypara{Spatio-temporal implicants}
In classical temporal logic, implicants correspond to time indices
at which a formula is satisfied. We extend this by interpreting the relational component $R$ as a
{spatio-temporal implicant}: a set that acts as a witness to the satisfaction of the formula. Our construction generalizes classic implicants by lifting them to the
spatio-temporal domain:
\(
R \subseteq \nodes \times \timevar,
\)
allowing formulas to track both \emph{which agents} and
\emph{when} they contribute to satisfaction. 
For example, for the ego agent $i$ fixed by the enclosing graph operator,
writing $r_j = \robt{j}{\trace}{\varphi}{t}$, we may define the fold as the 
ordered pair in $M = K \times R$:
\(
\mathcal{A}_E(\{r_j\}) = \Big(\, \bigoplus_j r_j,\; 
\{\, (j, t) \mid j \in \Nbrs_i^{\type,\circ}(t, W) \,\} \,\Big),
\)
where the first component is the scalar aggregate in $K$ and the second is 
the witness set in $R$. More generally, $R$ may track only those neighbors 
that contribute positively (e.g. $r_j \succ \bot_K$).

\mypara{Readout and Coupling}
The readout map $h : M \to K$ allows the relational component to influence the final semantic value.
A general form is: $h(r,R) = \Phi\bigl(r,\; \pi_{\nodes}(R)\bigr)$, where $\pi_{\nodes}(R) = \{\, i \mid \exists t,\ (i,t)\in R\,\}$
projects onto the set of participating agents, and $\Phi$ combines
signal-level and structural information.
For e.g., a hybrid robustness measure augments the scalar robustness $r$ with a count-based
term: $h(r,R) = r + \alpha\bigl(|\pi_{\nodes}(R)|\bigr),
$ where $\alpha \in \mathbb{R}$ is some constant.

\mypara{General Semiring Compositions}
More generally, the product construction extends to arbitrary families
of semirings:
\(
K = \prod_{\ell=1}^m K_\ell,
\ 
M = \prod_{\ell=1}^m M_\ell,
\)
with layer-specific aggregation and readout maps. This enables the
simultaneous propagation of multiple semantic quantities, such as: (a) robustness values, (b) counts or densities, (c) witness sets, (d) probabilistic weights.
Thus, STL-GO admits a unified algebraic semantics in which temporal,
spatial, and structural information are propagated compositionally
through product semiring constructions.
For some examples of algebras to pick for different semantic objectives, the
reader is encouraged to look at Appendix~\ref{sec:composition}.
\section{Structural Guarantees and Complexity}
We now characterize when the generalized aggregation framework described yields semantics that are
consistent with Boolean satisfaction.
\begin{definition}[Soundness and Completeness]
\label{def:sound-complete}
    We write $\trace,i,t \models \varphi$ to denote that agent $i$
satisfies $\varphi$ at time $t$ along trace $\trace$. In addition, we associate to each formula a quantitative semantics
(robustness value)
\(
\robt{i}{\trace}{\varphi}{t} \in K,
\)
which measures the degree of satisfaction of the formula in the
underlying algebra $K$. The quantitative semantics defined by the function $\robt{i}{\trace}{\varphi}{t} \in K$ is
\emph{sound} if
\(
\robt{i}{\trace}{\varphi}{t} \succ \bot_K
\;\Rightarrow\;
\trace,i,t \models \varphi,
\)
and \emph{complete} if
\(
\trace,i,t \models \varphi
\;\Rightarrow\;
\robt{i}{\trace}{\varphi}{t} \succeq \bot_K,
\) i.e., soundness ensures that positive robustness implies
satisfaction, while completeness ensures all satisfying
configurations are assigned non-negative robustness. Throughout we
adopt the standard STL convention in which the threshold value
$\bot_K$ itself counts as satisfying, so that the boundary case
$\robt{i}{\trace}{\varphi}{t} = \bot_K$ is treated as (weak)
satisfaction.
\end{definition}

\noindent The temporal semantics are sound and complete by classical
results~\cite{fainekos2009robustness}; our goal is to identify
conditions the graph-operator pipeline
$K^* \xrightarrow{\mathcal{A}_E} M \xrightarrow{h} K$
preserves this property.
Throughout, we assume that $K$ is equipped with a partial order
$\preceq$ compatible with the semiring operations, and write
$\bot_K$ for the threshold separating satisfaction from
violation (e.g., $0 \in \Rinf$ for the min-max algebra).

\noindent Before turning to soundness and completeness, we first record a weaker
but very general structural property of the layered semantics. Namely,
if each stage of the graph-aggregation pipeline is monotone w.r.t. the natural orders on the semantic and aggregation spaces, the overall operator preserves the pointwise ordering of agent-wise
valuations. This result is independent of any particular interpretation
of the threshold $\bot_K$ and applies to scalar, product, and
hybrid semantics.

\begin{definition}[Monotone map]
A function $f : X \to Y$ between partially ordered sets is
monotone if
\(x \preceq x' \;\Rightarrow\; f(x) \preceq f(x').\)
Monotonicity implies that increasing the robustness of inputs does not decrease the value of the output. 
\end{definition}

\begin{theorem}[Order preservation of layered aggregation]
\label{thm:order-preservation}
Let $(K,\preceq_K)$ be a partially ordered semiring, and let
\(S := K^{\nodes}
\)
be the induced semimodule of agent-indexed valuations equipped with the
pointwise order
\(
s \preceq_S s'
\ \Longleftrightarrow \ 
s(i) \preceq_K s'(i)
\;\; \forall i \in \nodes.
\)
Let $(M,\preceq_M)$ be a partially ordered aggregation space.
Fix an agent $i \in \nodes$, time $t$, graph type $\type \in \settypes$,
and consider the graph-aggregation pipeline
\(
S \xrightarrow{\mathrm{res}_i^\type} K^{\Nbrs_i^{\type,\circ}(t,W)}
\xrightarrow{\mathcal A_E} M
\xrightarrow{h} K,
\)
where $\mathrm{res}_i^\type$ denotes restriction of an agent-wise
valuation to the eligible neighborhood
$\Nbrs_i^{\type,\circ}(t,W)$.
Assume:
(1) the restriction map $\mathrm{res}_i^\type$ is order-preserving;
   (2) the fold map $\mathcal A_E$ is monotone with respect to the
    product order on $K^{\Nbrs_i^{\type,\circ}(t,W)}$;
    (3) the readout map $h : M \to K$ is monotone;
    (4) the graph-type quantifier $\mathcal Q_\#^K$ is monotone with
    respect to the pointwise order on $K^{|\settypes|}$.
Then the induced local graph operator
\(
T_i^\# : S \to K
\)
is monotone, i.e.,
\(
s \preceq_S s'
\ \Longrightarrow \ 
T_i^\#(s) \preceq_K T_i^\#(s').
\)
If this holds for every agent \(i \in \nodes\), then the induced global
operator
\(
T : S \to S,
\ 
(T(s))(i) := T_i^\#(s),
\)
is monotone with respect to the pointwise order on \(S\).
\label{thm:order}
\end{theorem}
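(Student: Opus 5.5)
The plan is to write $T_i^\#$ as a composition of monotone maps and use that a composition of monotone maps is monotone, together with the observation that the final assembly over graph types is monotone in each coordinate.

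\textbf{Step 1: a single graph type.} Fix $\type\in\settypes$ and define
\[
g_\type := h\circ\mathcal A_E\circ \mathrm{res}_i^\type : S\to K .
\]
Let $s\preceq_S s'$. By assumption (1), $\mathrm{res}_i^\type(s)\preceq \mathrm{res}_i^\type(s')$ in the product order on $K^{\Nbrs_i^{\type,\circ}(t,W)}$; this is immediate anyway, since the pointwise order on $S$ restricts to the pointwise order on any subset of agents. By assumption (2), $\mathcal A_E$ preserves this inequality in $M$. By assumption (3), $h$ carries it to $K$. Hence $g_\type(s)\preceq_K g_\type(s')$.

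The one subtlety is that $\mathcal A_E$ was introduced on finite multisets $K^{*}$, whereas the pipeline feeds it an indexed tuple. I would read $\mathcal A_E$ on $K^{\Nbrs_i^{\type,\circ}(t,W)}$ as $\mathcal A_E$ applied to the multiset of entries. For a fixed neighbourhood, the two valuations $s$ and $s'$ are compared index by index, so the product order is the correct notion and assumption (2) is stated for exactly that order. The neighbourhood $\Nbrs_i^{\type,\circ}(t,W)$ depends only on $t$, $W$ and the graph, not on $s$, so both valuations are restricted to the same index set.

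\textbf{Step 2: combining graph types.} Define the tuple map $G(s) := (g_\type(s))_{\type\in\settypes}\in K^{|\settypes|}$. Since each coordinate $g_\type$ is monotone by Step 1, $s\preceq_S s'$ gives $G(s)\preceq G(s')$ pointwise on $K^{|\settypes|}$. We have $T_i^\# = \mathcal Q_\#^K\circ G$, and assumption (4) gives $T_i^\#(s)\preceq_K T_i^\#(s')$.

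For the concrete instances $\mathcal Q_\exists^K=\bigoplus$ and $\mathcal Q_\forall^K=\bigotimes$, assumption (4) follows from compatibility of $\preceq$ with the semiring operations: $a\preceq a'$ and $b\preceq b'$ imply $a\oplus b\preceq a'\oplus b'$, and likewise for $\otimes$. Induction over the finite set $\settypes$ then extends this to the full $n$-ary operation.

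\textbf{Step 3: the global operator.} If $s\preceq_S s'$, then $(T(s))(i)=T_i^\#(s)\preceq_K T_i^\#(s')=(T(s'))(i)$ for every $i$ by Step 2. This is precisely $T(s)\preceq_S T(s')$ in the pointwise order on $S$.

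Nothing here is deep. I expect the only real obstacle to be bookkeeping rather than mathematics: making the domain of $\mathcal A_E$ (multisets versus tuples) and the order on $K^{|\settypes|}$ precise, so that assumptions (2) and (4) apply verbatim.
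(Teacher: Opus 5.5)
Your proposal is correct and matches the paper's proof essentially step for step: you use assumptions (1)--(3) to compose restriction, fold, and readout for each graph type, collect the per-type outputs into a pointwise-ordered tuple in $K^{|\settypes|}$ and apply the monotone quantifier (4), and then obtain the global statement coordinatewise. Your side remarks, on the multiset-versus-tuple domain of $\mathcal A_E$ and on deriving (4) for $\bigoplus,\bigotimes$ from order compatibility, are bookkeeping the paper leaves implicit and do not change the argument.
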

\begin{proof}
Refer to Appendix~\ref{sec:proof-thm1} for the proof.
\end{proof}

\noindent Theorem~\ref{thm:order} is the structural basis for the results that follow.

\begin{proposition}[Soundness and Completeness of Graph Aggregation]
Let $(K,\oplus,\otimes,\ominus,\bot_K,\top_K)$ be a partially ordered
De Morgan algebra with order $\preceq$.
Let $(M,\boxplus,e)$ be a commutative monoid with order $\preceq_M$,
and let
\(\mathcal{A}_E : K^* \to M, \qquad
h : M \to K\)
define the aggregation pipeline for a count constraint $E = [e_1,e_2]$.
Define the composed map
\(
F_E := h \circ \mathcal{A}_E : K^* \to K.
\)
Assume:
(1)
For each input $r_j$,
\(
r_j \succ \bot_K \;\Longleftrightarrow\;
\text{neighbor } j \text{ satisfies } \varphi;
\)
(2) For every configuration $\{r_j\}$, if
$c^+ = |\{j : r_j \succ \bot_K\}|$,
$c^+ \in [e_1,e_2] \;\Rightarrow\; F_E(\{r_j\}) \succeq \bot_K$, and \
$c^+ \notin [e_1,e_2] \;\Rightarrow\; F_E(\{r_j\}) \prec \bot_K.$

\noindent Then the induced graph operator is sound and complete.
Consequently, since by Assumption~1
$c^{+} = \big|\{\, j : \trace, j, t \models \varphi \,\}\big|$,
the quantitative sign agrees with Boolean satisfaction:
\(
F_E(\{r_j\}) \succeq \bot_K
\;\Longleftrightarrow\;
(\mas, i, t) \models \InGO{\graph}{E}{W}{\#}\varphi ,
\)
where the threshold value $\bot_K$ is treated as satisfying, in keeping
with the boundary convention of Definition~\ref{def:sound-complete}.
\label{prop:graph-sound}
\end{proposition}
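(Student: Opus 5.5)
The argument fixes an agent $i$, a time $t$ and a graph type $\type$, and reads the Boolean semantics of the graph operator on that single graph. Write $N := \Nbrs_i^{\type,\circ}(t,W)$ and $r_j := \robt{j}{\trace}{\varphi}{t}$ for $j \in N$. By Assumption~(1), $r_j \succ \bot_K$ holds exactly when $(\mas,j,t)\models\varphi$. The Boolean count in the definition of $\InGO{\graph}{E}{W}{\#}\varphi$ is therefore
\[
\bigl|\{ j \in N \mid (\mas,j,t)\models\varphi \}\bigr| = c^+ .
\]
So the per-graph Boolean condition is simply ``$c^+ \in [e_1,e_2]$''. Assumption~(2) was written precisely to capture this condition in terms of the sign of $F_E$.

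Next we prove the per-graph biconditional $F_E(\{r_j\}) \succeq \bot_K \Leftrightarrow c^+\in E$.
\begin{itemize}
\item[($\Leftarrow$)] This is the first clause of Assumption~(2).
\item[($\Rightarrow$)] Argue by contraposition. If $c^+ \notin E$, the second clause gives $F_E \prec \bot_K$, and strictness rules out $F_E \succeq \bot_K$; here we use antisymmetry of the partial order.
\end{itemize}
Soundness and completeness for a single graph type are then immediate. For soundness, $F_E \succ \bot_K$ implies $F_E \succeq \bot_K$, hence $c^+\in E$, hence satisfaction. For completeness, satisfaction means $c^+\in E$, which gives $F_E\succeq\bot_K$ under the stated boundary convention.

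The remaining step is to lift this through the graph quantifier $\mathcal Q_\#^K$, which is $\bigoplus$ for $\exists$ and $\bigotimes$ for $\forall$. This is where I expect the main difficulty. We need the facts that $a\oplus b \succeq \bot_K$ iff $a\succeq\bot_K$ or $b\succeq\bot_K$, and that $a\otimes b\succeq\bot_K$ iff both are. These hold when $\oplus,\otimes$ are the lattice join and meet of a \emph{total} order compatible with $\preceq$, as in the min-max and Boolean instances. I would state this explicitly as the meaning of ``order compatible with the semiring operations'', and I would invoke Theorem~\ref{thm:order} only for the monotonicity of $\mathcal Q_\#^K$. The $\succeq$ direction follows from monotonicity alone, since the join dominates each argument and the meet of elements above $\bot_K$ stays above it. The converse needs the join of two elements below $\bot_K$ to remain below $\bot_K$, which uses totality (or at least that $\{x : x\prec\bot_K\}$ is closed under join). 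In a partial order this could fail, so I would record totality of $\preceq$ on the relevant values as the needed hypothesis.

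With this in hand, induction over the finite family $\settypes$ gives
\[
T_i^\#(s_{\varphi,t})\succeq\bot_K \iff \#\,\graph^\type\in\graphs_t.\ c^+_\type\in E ,
\]
and the right-hand side is exactly the Boolean semantics. The $\Out$ operator is handled identically.
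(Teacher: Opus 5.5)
Your proposal is correct and takes the same route as the paper's proof. Both fix $i$, $t$, $\tau$, use Assumption~(1) to identify $c^+$ with the Boolean neighbor count, apply Assumption~(2) per graph type, and lift through $\mathcal{Q}_\#^K$ using a sign-correctness property of $\bigoplus$/$\bigotimes$. You are in fact more careful than the paper on two points:
\begin{itemize}
\item The paper asserts $c^+\in[e_1,e_2] \iff F_E \succ \bot_K$, which does not follow from the non-strict first clause of Assumption~(2). Your $\succeq$-biconditional, obtained via antisymmetry, is the version that does follow.
\item The paper uses strict sign-correctness of $\oplus$ without listing it as a hypothesis, while you record explicitly the totality (or closure) condition the quantifier lift needs.
\end{itemize}
The only slip is the citation of Theorem~\ref{thm:order}: that theorem assumes monotonicity of $\mathcal{Q}_\#^K$ rather than proving it, so the citation is unnecessary.
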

\begin{proof}
Refer to Appendix~\ref{sec:proof-prop-1} for the proof.
\end{proof}
\begin{lemma}[Quantifiers preserve threshold semantics]
\label{lem:quantifier-threshold}
Let $(K,\oplus,\otimes,\ominus,$ $\bot_K,\top_K)$ be an ordered
De~Morgan algebra, and let $\nodes$ be finite.
Assume that for all finite families $\{x_i\}_{i \in I} \subseteq K$,
\(
\bigoplus_{i \in I} x_i \succ \bot_K
\iff
\exists i \in I : x_i \succ \bot_K,
\text{ and }
\bigotimes_{i \in I} x_i \succ \bot_K
\iff
\forall i \in I : x_i \succ \bot_K.
\)
Then the multi-agent quantifiers satisfy:
\(
\robt{}{\trace}{\EX_{\nodes}\,\varphi}{t} \succ \bot_K
\iff
\exists i \in \nodes : \robt{i}{\trace}{\varphi}{t} \succ \bot_K,
\text{ and } \)
\(
\robt{}{\trace}{\FA_{\nodes}\,\varphi}{t} \succ \bot_K
\iff
\forall i \in \nodes : \robt{i}{\trace}{\varphi}{t} \succ \bot_K.
\)
Consequently, if the agent-local semantics of $\varphi$ are sound and
complete, then so are the multi-agent quantifiers
$\EX_{\nodes}\,\varphi$ and $\FA_{\nodes}\,\varphi$.%
\footnote{The strict separation ($\succ\bot_K$) used in the equivalences
above is consistent with the weak completeness convention of
Definition~\ref{def:sound-complete}: a boundary input $x_i=\bot_K$ leaves
both sides of the relevant equivalence false (for $\bigotimes=\min$) or is
absorbed by a strictly positive term (for $\bigoplus=\max$), so no
satisfying configuration is misclassified.}
\end{lemma}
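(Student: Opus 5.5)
The equivalences are direct instantiations of the hypotheses on $\bigoplus$ and $\bigotimes$, using the definitions of the multi-agent quantifiers from Section~\ref{subsec:graph-operators}. By definition,
\(
\robt{}{\trace}{\EX_{\nodes}\varphi}{t} = \mathsf{Acc}_{\oplus}(s_{\varphi,t}) = \bigoplus_{i\in\nodes} s_{\varphi,t}(i)
\)
with $s_{\varphi,t}(i)=\robt{i}{\trace}{\varphi}{t}$. Since $\nodes$ is finite, $\{s_{\varphi,t}(i)\}_{i\in\nodes}$ is a finite family in $K$. The assumed characterization of $\bigoplus$ then gives
\(
\robt{}{\trace}{\EX_{\nodes}\varphi}{t}\succ\bot_K \iff \exists i:\robt{i}{\trace}{\varphi}{t}\succ\bot_K .
\)
The $\FA_{\nodes}$ case is identical, using $\mathsf{Acc}_{\otimes}$ and the assumed characterization of $\bigotimes$. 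No appeal to Theorem~\ref{thm:order} is needed for this part.

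For the ``consequently'' clause, I would check soundness and completeness in the sense of Definition~\ref{def:sound-complete}, where Boolean satisfaction of $\FA_\nodes\varphi$ (resp.\ $\EX_\nodes\varphi$) is $\bigwedge_i i.\varphi$ (resp.\ $\bigvee_i i.\varphi$).

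Soundness is immediate. If $\robt{}{\trace}{\EX_\nodes\varphi}{t}\succ\bot_K$, the first equivalence yields some $i$ with $\robt{i}{\trace}{\varphi}{t}\succ\bot_K$. Agent-local soundness then gives $\trace,i,t\models\varphi$, hence $(\mas,t)\models\EX_\nodes\varphi$. For $\FA_\nodes$ the same argument applies to every $i$.

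Completeness is the main obstacle, because the hypotheses only characterize the strict relation $\succ\bot_K$, while completeness asks for $\succeq\bot_K$. Agent-local completeness only gives $\robt{i}{\trace}{\varphi}{t}\succeq\bot_K$ for the satisfying agents. I would bridge this using the order compatibility of the semiring operations assumed in Section~4: both $\oplus$ and $\otimes$ are monotone in each argument, which is condition (4) of Theorem~\ref{thm:order} applied to agents instead of graph types.
\begin{itemize}
\item \textbf{Case $\FA$.} If every $x_i:=\robt{i}{\trace}{\varphi}{t}\succeq\bot_K$, then monotonicity of $\bigotimes$ gives $\bigotimes_i x_i\succeq\bigotimes_i\bot_K$. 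Multiplicative idempotence of the lattice (it is a De Morgan algebra) makes the right-hand side equal to $\bot_K$.
\item \textbf{Case $\EX$.} If some $x_{i_0}\succeq\bot_K$, then, because $\oplus$ is the lattice join, $\bigoplus_i x_i\succeq x_{i_0}\succeq\bot_K$.
\end{itemize}
If the reader prefers to avoid lattice structure, I would instead argue as the footnote suggests. In the boundary case $x_i=\bot_K$, either a strictly positive term exists, and the strict equivalence already gives $\succ\bot_K$, or all relevant terms are $\succeq\bot_K$ and the monotonicity bound above applies. I would verify this case split in the min-max and Boolean instances to make sure no satisfying configuration is sent strictly below $\bot_K$.
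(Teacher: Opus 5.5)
Your proof is correct. For the two equivalences and for soundness it is exactly the paper's argument: unfold $\EX_\nodes$ and $\FA_\nodes$ as the finite $\bigoplus$ and $\bigotimes$ over $\nodes$, apply the assumed sign-correctness, and chain with agent-local soundness.

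Where you differ is completeness. The paper only says that combining the equivalences with agent-local soundness and completeness gives the result. That skips the point you identified: the hypotheses speak only about $\succ\bot_K$, while agent-local completeness gives only $x_i\succeq\bot_K$. So a boundary value $x_i=\bot_K$ is not covered by the stated equivalences.

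Your lattice argument supplies the missing step:
\begin{itemize}
\item for $\FA_\nodes$, monotonicity and idempotence of $\otimes$ give $\bigotimes_i x_i\succeq\bot_K$;
\item for $\EX_\nodes$, $\oplus$ being the join gives $\bigoplus_i x_i\succeq x_{i_0}\succeq\bot_K$.
\end{itemize}
The cost is that you must explicitly assume $\preceq$ is the lattice order of the De~Morgan algebra. The paper only gestures at this via ``compatible with the semiring operations'' and the footnote.

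You can drop your final fallback paragraph. It does not actually avoid the lattice structure: the $\EX_\nodes$ case still needs $x\oplus y\succeq x$, and the $\FA_\nodes$ case still needs $\bot_K\otimes\bot_K=\bot_K$. Checking the min-max and Boolean instances would also not be a general proof.
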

\begin{proof}
Refer to Appendix~\ref{sec:proof-lem1} for the proof.
\end{proof}
\begin{theorem}[End-to-end soundness and completeness]
Assume that: (1) atomic predicate valuations are sound and complete, i.e., for every atomic predicate $\mu$,
    \(\robt{i}{\trace}{\mu}{t} \succ \bot_K
    \Leftrightarrow
    \trace,i,t \models \mu,
    \)
(2) the temporal/Boolean operators admit an interpretation over $(K,\oplus,\otimes,\ominus,\bot_K,\top_K)$ that is sound and complete
with respect to the Boolean STL semantics;
(3)  each graph operator satisfies Proposition~\ref{prop:graph-sound};
(4)  the multi-agent quantifiers satisfy
Lemma~\ref{lem:quantifier-threshold}, and the graph-type quantifiers
satisfy Proposition~\ref{prop:graph-sound};
then for every STL-GO formula $\phi$,
\(\robt{}{\trace}{\phi}{t} \succ \bot_K \Longleftrightarrow \trace,t \models \phi,
\)
\noindent that is, the quantitative semantics are sound and complete with respect
to the Boolean semantics.
\end{theorem}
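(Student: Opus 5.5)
The proof goes by structural induction, run over the two syntactic strata in sequence. First we show that for every agent-local formula $\varphi$, every agent $i$ and every time $t$,
\[
\robt{i}{\trace}{\varphi}{t} \succ \bot_K \Longleftrightarrow \trace,i,t \models \varphi .
\]
Then we lift this biconditional to multi-agent formulas $\phi$. The induction hypothesis is this sign-agreement biconditional itself, not the weaker sound/complete pair of Definition~\ref{def:sound-complete}. Using the biconditional is what lets the hypothesis feed directly into the assumptions of Proposition~\ref{prop:graph-sound} and Lemma~\ref{lem:quantifier-threshold}.

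\textbf{Agent-local formulas.}
\begin{itemize}
\item \emph{Base cases.} The formula $\top$ is handled by $\one$. Atomic predicates $\mu_x$ are exactly Assumption~(1).
\item \emph{Boolean and temporal operators} ($\neg$, $\wedge$, $\Untilbd{t_1}{t_2}$). We invoke Assumption~(2). This assumption has to be read compositionally: if the subformula values satisfy sign agreement at every agent and time, then so does the value computed by $\ominus$, $\otimes$, and the $\bigoplus/\bigotimes$ Until formula.
\item \emph{Graph operators} $\InGO{\graph}{E}{W}{\#}\varphi$ and $\OutGO{\graph}{E}{W}{\#}\varphi$. Fix a graph type $\type$. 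By the induction hypothesis, applied at every neighbor $j \in \Nbrs_i^{\type,\circ}(t,W)$, condition~(1) of Proposition~\ref{prop:graph-sound} holds for the inputs $r_j = \robt{j}{\trace}{\varphi}{t}$. Hence $c^+$ is the Boolean count of qualifying neighbors in $\graph^\type$, and $F_E$ has the correct sign for that type. We then combine across $\type \in \settypes$ with $\mathcal{Q}_\#^K \in \{\bigoplus,\bigotimes\}$. Here Assumption~(4) supplies the facts that $\bigoplus$ is positive iff some argument is positive, and $\bigotimes$ is positive iff all arguments are positive. This matches $\exists \graph^\type$ and $\forall \graph^\type$ respectively.
\end{itemize}

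\textbf{Multi-agent formulas.}
\begin{itemize}
\item \emph{Base cases.} Predicates $\mu$ over $(\jointstate_t,\world_t)$ follow from Assumption~(1) read for joint predicates. The embedding $i.\varphi$ is immediate from the agent-local result.
\item \emph{Inductive cases.} The connectives $\neg$, $\wedge$, $\Untilbd{t_1}{t_2}$ again use Assumption~(2). The shorthands $\FA_\nodes$ and $\EX_\nodes$ can be handled in either of two ways: unfold them into $\bigwedge_i i.\varphi$ and $\bigvee_i i.\varphi$, or apply Lemma~\ref{lem:quantifier-threshold} directly to the agent-indexed valuation $s_{\varphi,t}$.
\end{itemize}
Either route gives sign agreement for every $\phi$, which is the claim.

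\textbf{Main obstacle: the boundary value $\bot_K$.} Sign agreement is a strict-versus-weak issue. Proposition~\ref{prop:graph-sound} delivers $F_E \succeq \bot_K \iff$ satisfaction, whereas the theorem and its induction hypothesis use $\succ \bot_K$. Two problems follow.
\begin{itemize}
\item \emph{Outputs.} A graph operator can output exactly $\bot_K$ while satisfied. This breaks the strict biconditional at the next level up, whether the output is fed to a parent graph operator as a neighbor value or to a negation.
\item \emph{Negation.} Negation only preserves sign agreement if $\ominus$ maps $\{\succ\bot_K\}$ onto $\{\prec \bot_K\}$ and fixes $\bot_K$. In that case $\bot_K$ must be classified consistently on both sides.
\end{itemize}
To handle this, I would first try to prove the statement in the standard STL form: soundness for $\succ\bot_K$ and completeness for $\succeq \bot_K$, both carried through the induction simultaneously. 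Under this reading the negation case swaps the two directions, which is the classical argument of~\cite{fainekos2009robustness}. For graph operators, the hypothesis of Proposition~\ref{prop:graph-sound} is then needed only up to the boundary. If that fails, the fallback is an extra non-degeneracy assumption: every subformula value at the relevant agents and times differs from $\bot_K$. This is the usual generic-trace assumption, and under it $\succ$ and $\succeq$ coincide, so the strict biconditional goes through everywhere.
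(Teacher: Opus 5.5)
Your induction is the paper's: the atomic base case from Assumption~(1), Boolean/temporal steps from Assumption~(2), graph operators by feeding the hypothesis into condition~(1) of Proposition~\ref{prop:graph-sound}, quantifiers via Lemma~\ref{lem:quantifier-threshold}, and $i.\psi$ directly. Your boundary diagnosis is also correct. The paper's proof only says that Proposition~\ref{prop:graph-sound} ``applies'' and never re-establishes the strict hypothesis. Its proof of that proposition silently reads condition~(2) as $c^+\in E\iff F_E\succ\bot_K$. In fact the strict biconditional is false under the stated hypotheses. The signed deficit computed from $c^+$ satisfies condition~(2). Yet take one graph type and nonzero predicate values. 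Then $\InGOs{\graph}{[1,\infty)}{\exists}\mu$, at an agent with exactly one $\mu$-satisfying neighbor, has value $0=\bot_K$ while being satisfied. The paper's Table~\ref{tab:results_terminal} even reports signed-deficit robustness $[0,0]$ on $\varphi^{b\text{-}d}_\ell$ at $100\%$ satisfaction. So the statement must be modified, and what matters is which of your repairs works.

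\textbf{Repair~(a) fails at the graph-operator step.} This is the step you wave through (``needed only up to the boundary''). With only the weak pair as hypothesis, condition~(1) of Proposition~\ref{prop:graph-sound} is unavailable: an input equal to $\bot_K$ may come from a satisfied or a violated subformula. Let $c^0$ be the number of inputs equal to $\bot_K$. For the Boolean count $b$ you then only know $c^+\le b\le c^++c^0$, and condition~(2) says nothing about $b$. Nesting the example above breaks it. Suppose agent $i$'s only eligible neighbor $j$ satisfies $\InGOs{\graph}{[1,\infty)}{\exists}\mu$ with value exactly $\bot_K$. Evaluate $\InGOs{\graph}{[1,\infty)}{\exists}\InGOs{\graph}{[1,\infty)}{\exists}\mu$ at $i$: the Boolean count is $1\in E$, but $F_E=c^+-e_1=-1\prec\bot_K$. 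So even weak completeness fails. Route~(a) needs a boundary-robust accumulator hypothesis, e.g.\ $F_E\succ\bot_K$ only if every count in $[c^+,c^++c^0]$ lies in $E$, and $F_E\prec\bot_K$ only if none does, plus a total order and a sign-reversing $\ominus$. The min-max order statistic $\min(r_{(e_1)},-r_{(e_2+1)})$ satisfies this; the signed deficit does not.

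\textbf{Repair~(b) does close your argument,} and it is what the paper's proof tacitly uses. Assume $\preceq$ is total around $\bot_K$ and $\ominus$ is strictly sign-reversing. Assume also that neither the inputs to $\ominus$ nor the per-type values $h(\mathcal{A}_E(\cdot))$ ever equal $\bot_K$; for $\mathcal{Q}^K_\#\in\{\max,\min\}$, non-degeneracy of subformula values suffices. Then condition~(2) collapses to $c^+\in E\iff F_E\succ\bot_K$, and your strict induction goes through verbatim. State this as an explicit extra hypothesis rather than a ``generic-trace'' assumption. For count-based accumulators $F_E=\bot_K$ holds on open sets of traces (e.g.\ whenever $c^+=e_1$), so it excludes a non-negligible class.
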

\begin{proof}
    Refer to Appendix~\ref{sec:proof-thm2} for the proof.
\end{proof}
\noindent The preceding results establish soundness and completeness by verifying
that each syntactic construct (temporal and graph operators, and
quantifiers) preserves correspondence between quantitative
robustness and Boolean satisfaction. 
\\ 
\noindent\textbf{Monitoring Algorithm and Runtime Complexity.}
In Algorithm~\ref{alg:monitor} we describe a bottom-up procedure (with respect 
to the specification syntax) that computes the robustness of a given finite 
multi-agent trace, generalizing the Boolean STL-GO monitor of~\cite{stlgo}.
\renewcommand{\algorithmicrequire}{\textbf{Input:}}
\renewcommand{\algorithmicensure}{\textbf{Output:}}
\begin{algorithm}[t]
\caption{Quantitative monitoring of STL-GO}
\label{alg:monitor}
\begin{algorithmic}[1]
\Require Trace $\trace$; formula $\phi$; algebra $K$; accumulator $\mathcal{A}_E$, readout $h$.
\ForAll{subformulae $\psi$ of $\phi$, bottom-up; agents $i$; times $t$}
  \Switch{root operator of $\psi$}
    \Case{$\mu$ (predicate)} $\robt{i}{\trace}{\psi}{t}\gets \lbl(\Spmod_t,i,\mu)$ \EndCase
    \Case{$\neg,\wedge,\Untilbd{t_1}{t_2}$ (Boolean/temporal)} apply $\ominus,\otimes$, and the $\Untilbd{t_1}{t_2}$ recurrence over $K$ \EndCase
    \Case{$\EX_\nodes,\FA_\nodes$ (quantifier)} apply $\bigoplus_{i},\bigotimes_{i}$ over $K$ \EndCase
    \Case{$\InGO{\graph}{E}{W}{\#}\psi_1$ / $\OutGO{\graph}{E}{W}{\#}\psi_1$ (graph)}
      \State $\robt{i}{\trace}{\psi}{t}\gets \mathcal{Q}^K_{\#}\big(\,h(\mathcal{A}_E(\{\robt{j}{\trace}{\psi_1}{t}\mid j\in\Nbrs^{\type,\circ}_i(t,W)\}))\,\big)_{\type\in\settypes}$
    \EndCase
  \EndSwitch
\EndFor
\State \Return $\robt{}{\trace}{\phi}{\cdot}$
\end{algorithmic}
\end{algorithm}
\begin{theorem}[Monitoring complexity]\label{prop:complexity}
Let $\varphi$ be an STL-GO formula of size $|\varphi|$, evaluated over $N$ 
agents, horizon $T$, and $M$ graph types. Let
\(
\Delta := \max_{i,\,t,\,\type} \big|\Nbrs^{\type,\circ}_i(t,W)\big|
\)
denote the largest number of eligible neighbors any agent $i$ has in any graph type $\type$ at any 
time $t$. Assume that (i) all temporal operators use fixed-size bounded 
windows, (ii) the eligible-neighbor sets $\Nbrs^{\type,\circ}_i(t,W)$ are 
precomputed, and (iii) each accumulator update, readout, and semiring 
operation takes $O(1)$ time. Then Algorithm~\ref{alg:monitor} computes the 
robustness signal $\robt{}{\trace}{\varphi}{\cdot}$ in
\(O\!\big(|\varphi|\,N\,T\,M\,\Delta\big)\)
time.
\end{theorem}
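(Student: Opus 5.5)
The proof is a per-node cost accounting over the evaluation table computed by Algorithm~\ref{alg:monitor}. The algorithm fills one entry $\robt{i}{\trace}{\psi}{t}$ for each subformula $\psi$ of $\varphi$, each agent $i$, and each time $t$, giving at most $|\varphi|\,N\,T$ entries. Multi-agent entries have no agent index, so they can be charged to a single agent or bounded by $N$ entries. Because the loop runs bottom-up over the syntax tree, every value an entry reads from its direct subformulae has already been computed, so each read is an $O(1)$ table lookup. It therefore suffices to show that every entry costs at most $O(M\Delta)$ time under assumptions (i)--(iii); multiplying by the number of entries then gives $O(|\varphi|\,N\,T\,M\,\Delta)$.

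The operators split into four cases.
\begin{itemize}
  \item \emph{Predicates and Boolean connectives.} These need one evaluation of $\lbl$ or one application of $\ominus$, $\otimes$, $\oplus$, so they cost $O(1)$.
  \item \emph{Bounded Until.} The window $t+[t_1,t_2]$ has fixed size by assumption~(i), so its length is a constant. For each $t'$ in the window, the inner product $\bigotimes_{t''\in[t,t']}$ spans at most $t_2+1$ steps, also a constant. The direct expansion therefore uses $O(1)$ semiring operations per entry. A running-product recurrence would tighten this, but it is not needed for the stated bound.
  \item \emph{Graph operators.} For each of the $M$ graph types, the precomputed set $\Nbrs_i^{\type,\circ}(t,W)$ from assumption~(ii) is iterated once. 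This folds at most $\Delta$ values into $M$ via $\boxplus$, using at most $\Delta$ accumulator updates, followed by one readout $h$. That costs $O(\Delta)$ per type by assumption~(iii). Combining the $M$ readouts with $\mathcal Q_\#^K$ takes $O(M)$ more. The total per entry is $O(M(\Delta+1))=O(M\Delta)$, taking $\Delta\ge 1$ (or writing $\max(\Delta,1)$).
  \item \emph{Quantifiers $\EX_\nodes,\FA_\nodes$.} These aggregate $N$ values at a single time point, costing $O(N)$ per time. Over $T$ times this is $O(NT)$ per quantifier node, which is within the per-node budget $NT\cdot O(M\Delta)$.
\end{itemize}

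The main subtlety is the graph case when a fold does not decompose into $O(1)$ incremental $\boxplus$ steps. An example is the witness-set component $R$ of the product aggregation space, whose union is not constant-time in general. Assumption~(iii) covers this by treating each accumulator update as unit cost, so I would state explicitly that the fold is computed as $|\Nbrs|$ successive updates. The accounting also requires that the count constraint $E$ is handled inside $h$ or $\mathcal A_E$ and does not need sorting. If it did (for example, a ``$k$-th best neighbor'' readout), an extra $\log\Delta$ factor would appear, and assumption~(iii) must be read as excluding this. Summing all four cases over the syntax tree completes the bound.
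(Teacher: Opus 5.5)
Your proof is correct and follows essentially the same route as the paper's: a bottom-up per-subformula accounting in which predicate, Boolean, temporal and quantifier nodes cost $O(NT)$ and each graph-operator node costs $O(M\Delta)$ per $(i,t)$ pair, so graph nodes dominate. Your explicit handling of quantifier nodes ($O(N)$ per time step) and of $\Delta=0$ via $\max(\Delta,1)$ is slightly more careful than the paper's; one correction is that order-statistic readouts such as $\min\bigl(r_{(e_1)}, -r_{(e_2+1)}\bigr)$ do not force a $\log\Delta$ factor, because linear-time selection computes them in $O(\Delta)$, so assumption (iii) need not be read as excluding the paper's min-max and hybrid accumulators.
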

\begin{proof}
     Refer to Appendix~\ref{proof:prop-complexity} for the proof.
     \end{proof}
\section{Experiments}
{\setlength{\textfloatsep}{4pt plus 1pt minus 1pt}%
\captionsetup{belowskip=0pt}%
\begin{figure*}[t]
\centering
\begin{subfigure}{0.4\textwidth}
    \centering
    \includegraphics[width=0.8\linewidth]{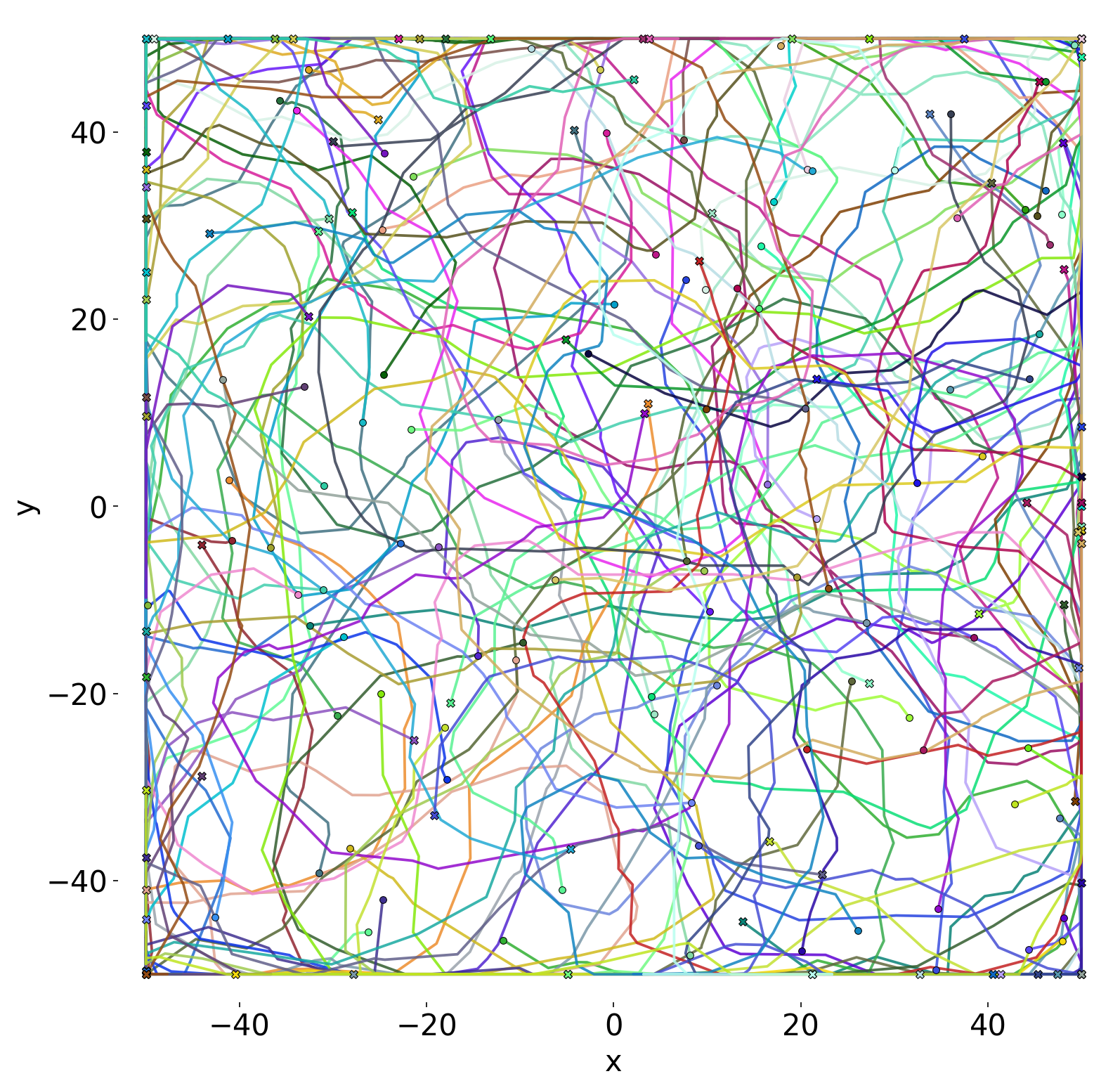}
    \caption{$|\nodes|=100$}
    \label{fig:trajectories-2D}
\end{subfigure}
\begin{subfigure}{0.4\textwidth}
    \centering
    \includegraphics[width=\linewidth]{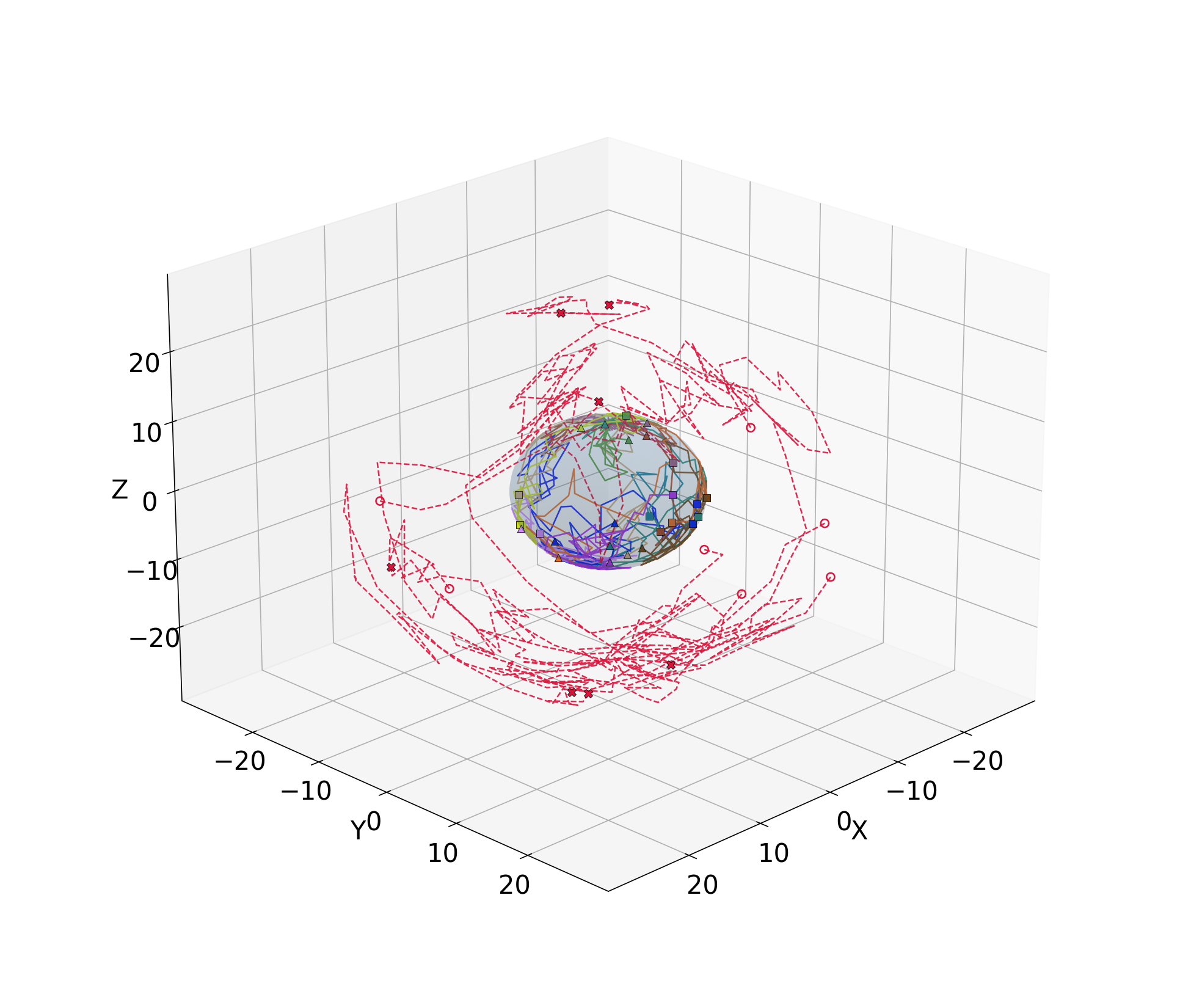}
    \caption{$|\nodes|=20, N_s = 14, N_f=6$}
    \label{fig:trajectories-3D}
\end{subfigure}
\caption{(a) Environment 1: 50x50 space with 100 agents, (b) Environment 2: 3D environment with 14 agents on the sphere, and 6 agents in space. }
\label{fig:trajectories}
\end{figure*}
We evaluate the STL-GO robustness semantics on two
multi-agent simulation environments with distinct dynamics
and interaction structures:
The two environments are chosen to stress different aspects of the semantics. \\
\noindent\textbf{Environment 1: 2D Bounded Region.}
\label{sec:env-2d}
\noindent We simulate $N \in \{10, 20, 50, 100\}$ agents over $T = 50$ discrete timesteps moving in a bounded planar region $[-50, 50]^2 \subset \mathbb{R}^2$.
Each agent's state is $\agentstate^i_t = (x^i_t, y^i_t,
\theta^i_t) \in \mathbb{R}^2 \times [0, 2\pi)$, comprising
position and heading angle.
The dynamics follow a stochastic Dubins-car model: 
\(
  x^i_{t+1} = x^i_t + v^i_t \cos\theta^i_{t+1} \Delta t, \ \
  y^i_{t+1} = y^i_t + v^i_t \sin\theta^i_{t+1} \Delta t, \ \
  \theta^i_{t+1} = (\theta^i_t + \omega^i_t \Delta t) \bmod 2\pi,
\) 
where $v^i_t \sim \mathrm{Unif}[0, 10]$ and 
$\omega^i_t \sim \mathrm{Unif}[-\pi/4, \pi/4]$
are linear and angular velocities.
(See Fig.~\ref{fig:trajectories-2D}).
The 2D Dubins-car region models a planar multi-robot team (e.g. ground 
search-and-rescue or distributed sensing), where the sensing/communication 
graphs are dense and time-varying, stressing the \emph{graph-operator} 
aggregation over large neighborhoods.

\noindent\textbf{Environment 2: 3D Sphere and Free Space.}
\label{sec:env-3d}
\noindent This environment is inspired by Earth-satellite systems, with two agent classes: ground stations constrained to a spherical surface and satellites moving freely in 3D space ($T=50$, Fig.~\ref{fig:trajectories-3D}).

\noindent\textit{Sphere-constrained agents ($N_s$).}
Each agent's state is parameterized by spherical angles $(\theta^i_t, \phi^i_t)$ on a sphere of radius $R_s$, with Cartesian position given by
\(
(x^i_t,\, y^i_t,\, z^i_t) = R_s\,(\sin\theta^i_t\cos\phi^i_t,\; \sin\theta^i_t\sin\phi^i_t,\; \cos\theta^i_t)
\).
The dynamics follow a random angular walk with $\Delta\theta^i_t, \Delta\phi^i_t \sim \mathrm{Unif}[-\pi/8, \pi/8]$ applied additively.

\noindent\textit{Free-space agents ($N_f$).}
These agents move freely in $\mathbb{R}^3$, initialized at some radius $R_f$ from the origin.
At each timestep, small perturbations are applied in spherical
coordinates:
$\Delta\phi^i_t, \Delta\theta^i_t \sim
\mathrm{Unif}[-\pi/100, \pi/100]$ and
$\Delta r^i_t \sim \mathrm{Unif}[-0.1, 0.1]$,
producing slow drifting trajectories near their initial
radii.
This system models a 
heterogeneous space-asset network (ground stations on a sphere, free-flying 
satellites), where connectivity is sparse and geometry-driven, stressing the 
\emph{multi-graph} quantification and edge-weight constraints spanning the dense/sparse and homogeneous/heterogeneous regimes that arise in 
real distributed CPS.

\noindent\textbf{Interaction Graphs.} 
At each timestep, three interaction graphs are computed
from the pairwise Euclidean distances
$d^t_{ij} = \|\pos^i_t - \pos^j_t\|_2$:
\emph{(i) Distance graph}
    $\mathcal{G}^{\mathrm{dist}}_t$: a weighted complete graph
    with edge weight $w^t_{ij} = d^t_{ij}$ for $i \neq j$.
\emph{(ii) Sensing graph}
    $\mathcal{G}^{\mathrm{sense}}_t$: an unweighted graph with
    edge $(i, j)$ iff $d^t_{ij} \leq R_{\mathrm{sense}}$ and $j$ is within the field of view of $i$,  i.e. the angle between the relative 
position $\pos^j_t - \pos^i_t$ and $i$'s heading vector 
$(\cos\theta^i_t, \sin\theta^i_t)$ is at most a half-angle $\beta$.
 \emph{(iii) Communication graph}
    $\mathcal{G}^{\mathrm{comm}}_t$: an unweighted graph with
    edge $(i, j)$ iff $d^t_{ij} \leq R_{\mathrm{comm}}$.
The topology of all the graphs changes at every timestep as agents move.

\noindent\textbf{Specifications.} 
We consider the following STL-GO specifications of increasing complexity, 
parameterized by a nesting depth $\ell \geq 0$.

\noindent \textit{Incoming reachability} ($\varphi^{r\text{-}in}_\ell$): The specifications $\varphi^{r\text{-}in}_\ell$ (Eq.~\ref{eq:spec-rec-in}) is defined by recursively nesting the $\In$ operator over the sensing graph $\mathcal{G}^{\mathrm{sense}}_t$, with edge-count bound $E=[1,4]$ and unconstrained weights $W=[-\infty,\infty]$.
It asserts that there exists an agent which, at some time within $[0,10]$, can be reached from a goal-satisfying agent via $\ell$ successive incoming edges over the sensing graph.
\begin{equation}
\begin{aligned}
\label{eq:spec-rec-in}
\psi_0 &= \InGO{\graph^{\mathrm{sense}}_t}{E}{W}{\exists} \, \mathtt{goal}, \;
\psi_{\ell + 1} = \InGO{\graph^{\mathrm{sense}}_t}{E}{W}{\exists} \, \psi_\ell, \;
\varphi^{r\text{-}in}_\ell = \EX_{\nodes} \, \Ev_{[0,10]} \, \psi_\ell, \; 
\end{aligned}
\end{equation}

\noindent \textit{Bi-directional reachability}($\varphi^{bi\text{-}dir}_\ell$): The formulas $\varphi^{bi\text{-}dir}_\ell$ (Eq.~\ref{eq:spec-rec-out-in}) necessitate bi-directional reachability over
$\mathcal{G}^{\mathrm{sense}}_t, \mathcal{G}^{\mathrm{comm}}_t$, with $E=[1,2]$ and $W=[-\infty,\infty]$.
The recursively defined subformulas $\psi^{\mathsf{out}}_\ell$ and $\psi^{\mathsf{in}}_\ell$ check reachability to a goal-satisfying agent via $\ell$ outgoing and incoming hops, respectively. 
Thus, $\varphi^{bi\text{-}dir}_\ell$ asserts that, within $[0,10]$, there exists an agent that can both reach and be reached by a goal-satisfying agent in $\ell$ hops.
\begin{equation}
\begin{aligned}
\label{eq:spec-rec-out-in}
\psi^{\mathsf{out}}_0 &= \OutGO{\{\mathcal{G}^{\mathrm{sense}}_t, \mathcal{G}^{\mathrm{comm}}_t \}}{E}{W}{\exists} \, \mathtt{goal}, \quad
\psi^{\mathsf{out}}_{\ell+1} = \OutGO{\{\mathcal{G}^{\mathrm{sense}}_t, \mathcal{G}^{\mathrm{comm}}_t\}}{E}{W}{\exists} \, \psi^{\mathsf{out}}_\ell \\
\psi^{\mathsf{in}}_0 &= \InGO{\{\mathcal{G}^{\mathrm{sense}}_t, \mathcal{G}^{\mathrm{comm}}_t\}}{E}{W}{\exists} \, \mathtt{goal}, \quad
\psi^{\mathsf{in}}_{\ell+1} = \InGO{\{\mathcal{G}^{\mathrm{sense}}_t, \mathcal{G}^{\mathrm{comm}}_t\}}{E}{W}{\exists} \, \psi^{\mathsf{in}}_\ell, \\
\varphi^{bi\text{-}dir}_\ell &= \EX_{\nodes} \, \Ev_{[0,10]} \left( \psi^{\mathsf{out}}_\ell \wedge \psi^{\mathsf{in}}_\ell \right), \quad \ell = 0, 1, \ldots
\end{aligned}
\end{equation}
\setlength{\intextsep}{0pt}        
\captionsetup{belowskip=0pt} 
 \begin{wrapfigure}{l}{0.52\textwidth}
    \centering    \includegraphics[width=\linewidth]{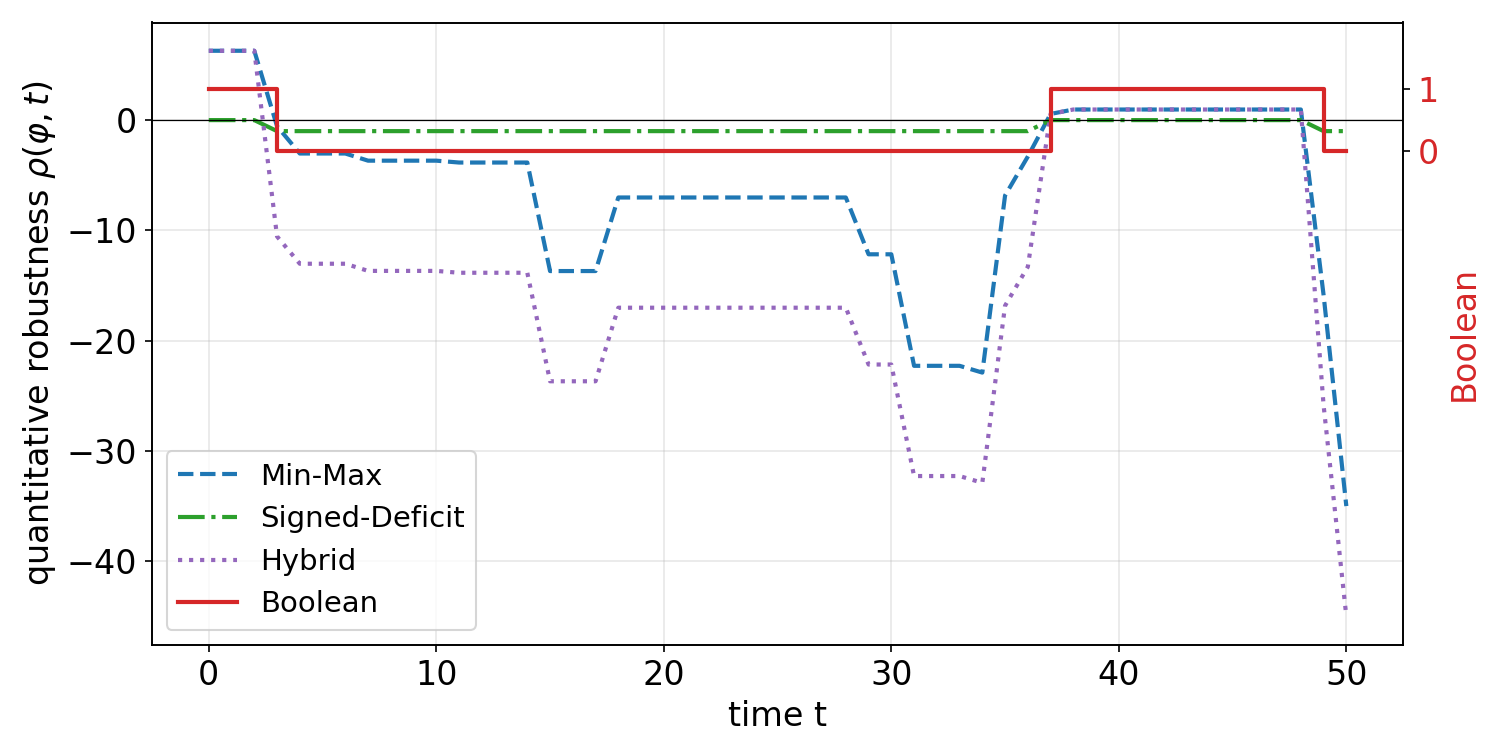}
    \caption{Robustness values across semantics for a sampled trajectory.}
    \label{fig:robust}
\end{wrapfigure}
\noindent \textit{Alternating Spatio-temporal operators} ($\varphi^{r\text{-}sp}_\ell$): These specifications (Eq. \ref{eq:spec-rec-alt})
interleave spatial and temporal operators in alternation over $\mathcal{G}^{\mathrm{dist}}_t, \mathcal{G}^{\mathrm{sense}}_t, \mathcal{G}^{\mathrm{comm}}_t$ graphs with edge-count bound $E=[1,20]$ and weight interval $W=[0, 20]$.
At each nesting level $\ell$, a temporal operator $\mathsf{T} = \{ \Ev, \Glob\}$ is applied first, followed by a spatial operator $\mathsf{S} = \{ \In, \Out \}$.
This allows the specification to encode diverse spatio-temporal patterns.
The atomic predicate $\mathtt{goal}$ holds at agent $i$ iff its position lies 
within the goal region $\mathcal{G}_{\mathrm{goal}} \subset \mathbb{R}^d$.
At each nesting level $\ell$, the temporal operator 
$\mathsf{T}_\ell \in \{\Ev, \Glob\}$ 
and the spatial operator 
$\mathsf{S}_\ell \in \{\InGO{\graph}{E}{W}{\exists}, \OutGO{\graph}{E}{W}{\exists}\}$ 
alternate, and the graph type is drawn cyclically from 
$\{\mathcal{G}^{\mathrm{dist}}_t, \mathcal{G}^{\mathrm{sense}}_t, 
\mathcal{G}^{\mathrm{comm}}_t\}$.
\begin{equation}
\label{eq:spec-rec-alt}
\psi^{sp}_0 = \mathtt{goal}, \quad 
\psi^{sp}_{\ell + 1} = \mathsf{T}_\ell \; \mathsf{S}_\ell \; \psi^{sp}_\ell, \quad 
\varphi^{r\text{-}sp}_\ell = \EX_{\nodes}\Ev_{[0,10]} \, \psi^{sp}_\ell, \quad
\ell = 0, 1, \ldots
\end{equation}
\noindent These formulae isolate the semantic target of the operators: 
$\varphi^{r\text{-}in}_\ell$ stresses nested \emph{incoming} 
count-constrained reachability; $\varphi^{bi\text{-}dir}_\ell$ adds 
multi-graph bidirectional reachability; and 
$\varphi^{r\text{-}sp}_\ell$ interleaves temporal and spatial 
operators to exercise layered composition. The recursive nesting 
depth $\ell$ controls the scaling complexity.

\noindent\textbf{Evaluation Metrics and Methodology.}
For each environment, we evaluate STL-GO specifications
under four semantic instantiations:
(i)~\emph{Boolean}: classical qualitative satisfaction
($K = \mathbb{B}$), yielding a binary verdict;
(ii)~\emph{Min-max}: standard STL robustness with the accumulator
$\mathcal{A}^{\mathrm{mm}}_{E} = \min(r_{(e_1)}, -r_{(e_2+1)})$;
(iii)~\emph{Signed-deficit}: the counting accumulator
$\mathcal{A}^{\mathrm{cd}}_E(\{r_j\}) = c-e_1$ if $c<e_1$,
$\min(c-e_1,\, e_2-c)$ if $e_1\le c\le e_2$, and $e_2-c$ if $c>e_2$,
where $c=|\{j: r_j\succeq\bot_K\}|$ counts satisfying neighbors;
(iv)~\emph{Hybrid}: the deficit-corrected accumulator
$\mathcal{A}^{\mathrm{h}}_{E} = r_{(k)} + \alpha \mathcal{A}^{\mathrm{cd}}_E(\{r_j\})$
(writing $k := e_1$ for the count threshold),
combining min-max and signed-deficit with
exchange-rate parameter $\alpha > 0$.
Signed-deficit counts $c=|\{j: r_j\succeq\bot_K\}|$ using the non-strict
threshold, so a nested subformula satisfied at zero margin is still counted,
and is sound and complete under the non-strict ($\succeq$) convention.
All share the same temporal De~Morgan algebra for temporal
operators and differ in the graph-operator accumulator.
\noindent We measure the wall-clock T(s) to evaluate the full STL-GO
specification for all agents at all timesteps, averaged over
100 independently generated trajectories and graph sequences.
To assess scalability, we vary two parameters independently:
(i)~the number of agents
$|\nodes| \in \{10, 20, 50, 100\}$ with fixed $T = 50$,
which affects neighbor set sizes and the cost of the
accumulator and multi-agent quantifiers; and
(ii)~the time horizon
$T \in \{10, 20, 50, 100\}$ with fixed $|\nodes| = 20$,
which affects the cost of temporal operators.
For each configuration, we report average monitoring time per
trajectory across 100 runs, broken down by semantic
instantiation.

\begin{table*}[t]
\centering
\caption{\footnotesize Results across $\varphi^{r\text{-}in}_\ell$, $\varphi^{bi\text{-}dir}_\ell$ (denoted $\varphi^{b\text{-}d}_\ell$), and $\varphi^{r\text{-}sp}_\ell$ over increasing recursive depths. For each, we report the percentage of satisfying trajectories (Sat.\%), the robustness range $[\min,\max]$ over the 100 trajectories, and computation time (T, seconds) for 100 trajectories, at fixed horizon $T=50$. For Boolean semantics this range lies in $\{0,1\}$, so $[0,1]$ indicates mixed verdicts and $[1,1]$ universal satisfaction. $\varphi^{r\text{-}in}_\ell$ and $\varphi^{b\text{-}d}_\ell$ use the 2D environment ($|\nodes|=100$); $\varphi^{r\text{-}sp}_\ell$ uses the 3D environment ($N_s=14$, $N_f=6$). Robustness and time are rounded to the nearest integer.
}
\label{tab:results_terminal}
\small
\setlength{\tabcolsep}{3pt}
\renewcommand{\arraystretch}{1.15}
\begin{tabular}{llllllllll}
\toprule
\multirow{2}{*}{Spec}
& \multirow{2}{*}{Sat.\%}
& \multicolumn{2}{c}{Bool.}
& \multicolumn{2}{c}{Min-Max}
& \multicolumn{2}{c}{Sign.-Def.}
& \multicolumn{2}{c}{Hybrid} \\
\cmidrule(lr){3-4}\cmidrule(lr){5-6}\cmidrule(lr){7-8}\cmidrule(lr){9-10}
&
& $[\min,\max]$ & T(s)
& $[\min,\max]$ & T(s)
& $[\min,\max]$ & T(s)
& $[\min,\max]$ & T(s) \\
\midrule
$\varphi^{r\text{-}in}_0$ & 94 & $[0,1]$ & 3 & $[-7,9]$  & 3 & $[-1,1]$ & 3 & $[-17,25]$ & 3 \\
$\varphi^{r\text{-}in}_1$ & 67 & $[0,1]$ & 3 & $[-16,9]$ & 3 & $[-1,1]$ & 3 & $[-36,19]$ & 3 \\
$\varphi^{r\text{-}in}_2$ & 55 & $[0,1]$ & 3 & $[-16,9]$ & 3 & $[-1,1]$ & 3 & $[-46,26]$ & 3 \\
$\varphi^{r\text{-}in}_3$ & 54 & $[0,1]$ & 3 & $[-16,9]$ & 3 & $[-1,1]$ & 3 & $[-56,28]$ & 3 \\
$\varphi^{r\text{-}in}_4$ & 54 & $[0,1]$ & 3 & $[-16,9]$ & 3 & $[-1,1]$ & 4 & $[-66,36]$ & 3 \\
\midrule
$\varphi^{b\text{-}d}_0$ & 100 & $[1,1]$ & 12 & $[3,10]$ & 11 & $[0,0]$ & 11 & $[4,20]$ & 11 \\
$\varphi^{b\text{-}d}_1$ & 100 & $[1,1]$ & 30 & $[3,10]$ & 28 & $[0,0]$ & 27 & $[7,30]$ & 27 \\
$\varphi^{b\text{-}d}_2$ & 100 & $[1,1]$ & 81 & $[3,10]$ & 73 & $[0,0]$ & 71 & $[8,40]$ & 76 \\
\midrule
$\varphi^{r\text{-}sp}_0$ & 65 & $[0,1]$ & 1 & $[-18,10]$ & 1 & $[-1,1]$ & 1 & $[-28,20]$ & 1 \\
$\varphi^{r\text{-}sp}_1$ & 10 & $[0,1]$ & 1 & $[-14,9]$  & 1 & $[-1,0]$ & 1 & $[-34,19]$ & 1 \\
$\varphi^{r\text{-}sp}_2$ & 26 & $[0,1]$ & 6 & $[-19,10]$ & 6 & $[-1,1]$ & 6 & $[-49,37]$ & 6 \\
$\varphi^{r\text{-}sp}_3$ & 6  & $[0,1]$ & 71 & $[-6,10]$  & 69 & $[-1,0]$ & 68 & $[-46,30]$ & 75 \\
\bottomrule
\end{tabular}
\end{table*}
\begin{table*}[t]
\centering

\begin{subtable}{0.48\textwidth}
\centering
\caption{Ablation on time horizon $T$.}
\label{tab:ablation_horizon}
\small
\setlength{\tabcolsep}{5pt}
\renewcommand{\arraystretch}{1.15}
\begin{tabular}{lrrrr}
\toprule
$T$ & Bool & MM & SD & Hyb. \\
\midrule
10  & 0.29 & 0.31 & 0.32 & 0.30 \\
20  & 0.62 & 0.71 & 0.73 & 0.74 \\
50  & 2.31 & 3.24 & 2.84 & 3.07 \\
100 & 6.70 & 10.65 & 9.58 & 9.68 \\
\bottomrule
\end{tabular}%
\end{subtable}
\hfill
\begin{subtable}{0.48\textwidth}
\centering
\caption{Ablation on number of agents.}
\label{tab:ablation_agents}
\small
\setlength{\tabcolsep}{5pt}
\renewcommand{\arraystretch}{1.15}
\begin{tabular}{lrrrr}
\toprule
$|\nodes|$ & Bool & MM & SD & Hyb. \\
\midrule
10  & 1.21  & 1.25  & 1.27  & 1.51 \\
20  & 2.16  & 3.09  & 2.86  & 2.87 \\
50  & 9.31  & 10.10 & 9.90  & 10.38 \\
100 & 25.35 & 26.68 & 27.61 & 26.19 \\
\bottomrule
\end{tabular}%
\end{subtable}
\caption{Ablation: wall-clock time T(s) for $\varphi^{r\text{-}in}_2$ over 100 trajectories, varying time horizon $T$ (left, fixed $|\nodes|=20$) and number of agents $|\nodes|$ (right, fixed $T=50$). Semantics: Boolean (Bool), Min-max (MM), Signed-Deficit (SD), Hybrid (Hyb.).
}
\label{tab:ablation_combined}
\end{table*}
\setlength{\textfloatsep}{4pt plus 1pt minus 1pt}   
\setlength{\intextsep}{4pt plus 1pt minus 1pt}  
\noindent \textbf{Results and Discussion.}
From Table~\ref{tab:results_terminal} we see that across all specifications the four semantics produce identical Boolean outcomes (Sat./Vio.), confirming that the quantitative semantics preserve correctness with respect to Boolean satisfaction; Fig.~\ref{fig:robust} visualizes this for a sampled trajectory.\footnote{Boolean is drawn as a post-step switching only at integer samples, while the quantitative curves are linearly interpolated; a Min-Max zero-crossing and the Boolean transition may thus appear shifted by up to one timestep, even though they agree in sign at each sampled $t$.} Runtimes are similar for the simpler $\varphi^{r\text{-}in}_{\ell}$, while differences grow with complexity in $\varphi^{bi\text{-}dir}_\ell$ and $\varphi^{r\text{-}sp}_{\ell}$, signed-deficit typically being fastest due to its simpler aggregation. The hybrid semantics is slightly more expensive but provides richer information: its larger range reflects the count-based term, which grows with the number of participating agents rather than being bounded by extrema. We emphasize that the four semantics compute different quantities and are not meant to compete on a single axis; the timing comparison quantifies the \emph{overhead} of the richer accumulators relative to the Boolean baseline, establishing acceptable cost.
 \noindent Table~\ref{tab:ablation_combined} reports the effect of time horizon $T$ and number of agents $|\nodes|$
on runtime for the spec $\varphi^{r\text{-}in}_2$. We fix $\varphi^{r\text{-}in}_2$ as the ablation baseline because it 
exercises nested graph operators without the large constant factors 
of the bi-directional family, making it ideal 
for isolating the effects of $T$ and $|\nodes|$.
Across $|\nodes|$, all four semantics scale near-identically, with differences within 
run-to-run variation. The clearest separation appears in the 
time horizon $T$, where the Boolean semantics scales best and the three 
quantitative accumulators track one another closely.


\noindent\textbf{Conclusions.}
\noindent We introduced a unified algebraic framework for quantitative semantics of STL-GO, separating temporal reasoning, graph aggregation, and multi-agent quantification. A key insight is that graph-based aggregation cannot, in general, be captured by standard semiring operations alone; instead, the structure of the aggregator determines whether the quantitative semantics align with Boolean satisfaction, and under appropriate conditions we obtain compositional soundness and completeness. Our experiments show that these semantics preserve Boolean correctness while incurring only modest overhead and scaling predictably with problem size. The proposed framework relies on the choice of aggregation operators,
which must be threshold-aligned with Boolean semantics. While we
identify suitable classes (e.g., order-statistic and hybrid), systematic
construction of expressive and well-behaved (e.g., smooth) aggregators
remains open. 
Future work includes developing efficient algorithms, designing
smooth aggregators, and extending the framework to stochastic
multi-agent settings.
\bibliographystyle{splncs04}
\bibliography{refs}

\clearpage
\appendix

\section*{\centering Appendix}

\section{Compositional Behavior of Algebraic Semantics}
\label{sec:composition}

\begin{table}[ht]
\centering
\caption{Algebraic instantiations of the aggregation pipeline for different semantics. (id = Identity). For the Hybrid row, the $\boxplus=(\max,+)$ column gives the monoid operations on $M=K\times\mathbb{Z}$; the first component is read out as the order statistic $r_{(k)}$ (the $k$-th largest value, $k:=e_1$), \emph{not} a plain $\max$-fold of the multiset.}
\label{tab:aggregation}
\small
\setlength{\tabcolsep}{3pt}
\renewcommand{\arraystretch}{1.2}

\begin{tabular}{@{}lcccccc@{}}
\toprule
\textbf{Semantics}
& $K$
& $(\oplus,\otimes)$
& $M$
& $\boxplus$
& $\mathcal{A}_E$
& $h$ \\
\midrule

Boolean
& $\Bool$
& $(\vee,\wedge)$
& $\Bool$
& $\vee$
& $\displaystyle
    \bigvee_{\substack{S \subseteq N_i \\ |S|=e_1}}
    \bigwedge_{j \in S} r_j$
& $\mathrm{id}$ \\

Min-max
& $\Rinf$
& $(\max,\min)$
& $\Rinf$
& $\max$
& $\min\bigl(r_{(e_1)}, -r_{(e_2+1)}\bigr)$
& $\mathrm{id}$ \\

Sign.-def.
& $\Rinf$
& $-$
& $\mathbb{Z}$
& $+$
& $
\begin{cases}
c - e_1, & c < e_1 \\
\min(c - e_1,\, e_2 - c), & e_1 \le c \le e_2 \\
e_2 - c, & c > e_2
\end{cases}
$
& $\mathrm{id}$ \\

Hybrid
& $\Rinf$
& $(\max,\min)$
& $K \times \mathbb{Z}$
& $(\max,\,+)$
& $\bigl(r_{(k)},\, \mathcal{A}^{\mathrm{cd}}_E\bigr)$
& $r_{(k)} + \alpha\,\mathcal{A}^{\mathrm{cd}}_E$ \\

\bottomrule
\end{tabular}
\end{table}


The algebraic semantics of STL-GO are parametric with respect to the
underlying algebraic structures used for temporal, graph-based, and
multi-agent aggregation. Different choices of these structures induce
distinct interpretations of satisfaction, enabling logical,
cost-based, and probabilistic reasoning within a unified framework.
We summarize the principal design choices below.

\subsection{Semiring choices.}
The temporal and multi-agent layers are governed by a De Morgan algebra
(or semiring) $(K,\oplus,\otimes,\ominus,\zero,\one)$.
Different instantiations yield different semantics:

\begin{itemize}
    \item \textbf{Lattice-based semirings.}
   $(K, \oplus, \otimes) \in
\{(\Bool, \vee, \wedge),\; (\Rinf, \max, \min)\}$.
Both operations are \emph{idempotent}: $a \oplus a = a$ and
$a \otimes a = a$, forming a bounded distributive lattice.
A De~Morgan negation exists ($\neg$ for Boolean, $-$ for min-max),
so the semantics of negation $\robt{i}{\trace}{\neg\varphi}{t}
= \ominus\,\robt{i}{\trace}{\varphi}{t}$ are well-defined without
restriction.
The Boolean algebra yields qualitative satisfaction; the min-max
algebra yields the standard STL robustness, where the value
measures the perturbation margin~\cite{fainekos2009robustness}.
\item \textbf{Tropical semirings.}
$(K,\oplus,\otimes)\!\in\!
\{(\Rpos \cup \{\infty\},\min,+),
(\mathbb{R} \cup \{-\infty\},\max,+)\}$.

Here $\oplus$ selects the best alternative while $\otimes = +$
{accumulates} values along a conjunction or temporal sequence,
rather than taking a bottleneck.
Multiplication is not idempotent ($a + a \neq a$), and no
De~Morgan negation exists: one cannot define $\ominus$ satisfying
$\ominus(a \oplus b) = \ominus a \otimes \ominus b$ when $\oplus$
and $\otimes$ have different algebraic structure.

    \item \textbf{Probabilistic and counting semirings.}
   $(\mathbb{R}_{\geq 0}, +, \times, 0, 1)$, or its restriction
to $[0,1]$.
Neither operation is idempotent: $p + p \neq p$ and
$p \times p \neq p$.
A negation $\ominus(p) = 1 - p$ can be defined on $[0,1]$, but it
satisfies De~Morgan's laws only under an independence assumption
($\Pr(\neg A) = 1 - \Pr(A)$ is exact, but
$\Pr(A \vee B) = \Pr(A) + \Pr(B)$ requires disjointness).
The interpretation is {probabilistic}: conjunction multiplies
probabilities and disjunction sums them, giving a satisfaction
likelihood under stochastic dynamics.
This family is relevant when agent behavior is modeled
stochastically and one seeks to evaluate $\Pr(\trace \models \phi)$
or an expected robustness.
\end{itemize}
Table~\ref{tab:semirings} summarizes the key properties.

\begin{table}[t]
\centering
\caption{Semiring families for the temporal and multi-agent layers.}
\label{tab:semirings}
\small
\renewcommand{\arraystretch}{1.2}
\begin{tabular}{@{}lccccl@{}}
  \toprule
  \textbf{Family}
    & $\oplus$/$\otimes$
    & \textbf{Idemp.}
    & \textbf{  De~Morgan.\ $\ominus$ }
    & \textbf{   Cont.   }
    & \textbf{Interpretation} \\
  \midrule
  Boolean
    & $\vee$ / $\wedge$
    & \checkmark
    & \checkmark
    & \ding{55}
    & qualitative satisfaction \\
  Min-max
    & $\max$ / $\min$
    & \checkmark
    & \checkmark
    & \checkmark
    & perturbation margin \\
  Tropical
    & $\min$ / $+$
    & $\oplus$ only
    & \ding{55}
    & \checkmark
    & cumulative cost \\
  Probabilistic
    & $+$ / $\times$
    & \ding{55}
    & partial
    & \checkmark
    & satisfaction likelihood \\
  \bottomrule
\end{tabular}
\end{table}

\subsection{Graph aggregation choices.}
The graph operators aggregate a multiset of neighbor robustness
values $\{r_1, \ldots,\\ r_m\} $$\subset K$ into a single value
$\mathcal{A}_E^{\K}(r_1, \ldots, r_m) \in K$, subject to the count
constraint $E = [e_1, e_2]$.  This aggregation is \emph{not} fixed
by the semiring; we identify some examples, distinguished
by the algebraic structure of their underlying aggregation.

\begin{itemize}
    \item \textbf{Idempotent reducers.}
[ $(K, \max, \min)$ or min-max selection].
Let $r_j = \robt{j}{\trace}{\varphi}{t}$
denote the robustness of $\varphi$ at neighbor $j$, and sort
the values in decreasing order:
$r_{(1)} \geq r_{(2)} \geq \cdots \geq r_{(m)}$,
with the convention $r_{(k)} = -\infty$ for $k > m$.
For a count interval $E = [e_1, e_2]$, define
\(
  \mathcal{A}_{[e_1,e_2]}^{\mathrm{os}}(r_1, \ldots, r_m)
  \;=\;
  \min\!\bigl(r_{(e_1)},\; -r_{(e_2+1)}\bigr).
\)
The first term $r_{(e_1)}$ is positive iff at least $e_1$
neighbors have positive robustness; the second term
$-r_{(e_2+1)}$ is positive iff at most $e_2$ do.
Their minimum is positive iff both conditions hold, i.e., iff
the count of satisfying neighbors lies in $[e_1, e_2]$.
The value measures the {perturbation margin of the critical
neighbor}: how much the $e_1$-th best neighbor's state can
change before the count drops below~$e_1$.
It is sound ( i.e. the sign agrees with Boolean verdict),
continuous in agent states.

    \item \textbf{Counting-based aggregators.}
[$(\mathbb{Z}, +, 0, -)$, an abelian group].
Let $c = |\{j : r_j \succeq \bot_K\}|$ and define
\[
  \mathcal{A}_{[e_1,e_2]}^{\mathrm{cd}}(r_1, \ldots, r_m)
  \;=\;
  \begin{cases}
    c - e_1 & c < e_1,\\
    \min(c - e_1,\; e_2 - c) & e_1 \le c \le e_2,\\
    e_2 - c & c > e_2.
  \end{cases}
\]
The value measures the {count surplus or deficit}: how many
neighbors could switch their Boolean verdict before the
cardinality constraint breaks.  A value of $+3$ means three
neighbors could switch their Boolean verdict before the
constraint breaks.  It is \emph{discontinuous}: a single
neighbor's robustness crossing zero induces a unit jump.

    \item \textbf{Averaging and linear aggregators.}
  [$(\mathbb{R}, +, \cdot\,,\, 0, 1)$ with
normalization by count]. We may define it as:
\(
  \mathcal{A}_{[e_1,e_2]}^{\mathrm{avg}}(r_1, \ldots, r_m)
  \;=\;
  \frac{1}{m}\sum_{j=1}^{m} r_j.
\)
This measures the {average quality} of neighbor satisfaction,
giving a smooth global summary that weights all neighbors equally
rather than focusing on the critical one.  It does not directly
enforce the count constraint $E$.
It is continuous and differentiable, but does not
recover the Boolean semantics in general (a positive average does
not imply that $e_1$ neighbors are individually satisfied).

    \item \textbf{Smooth or fuzzy aggregators.}
They include parameterized approximations, typically involving
$\exp$, $\log$, or sigmoid functions.
For example, a product-of-sigmoids
$\prod_j \sigma(\beta\, r_j)$ provides a smooth approximation to
counting. They are differentiable everywhere, making them amenable
to gradient-based optimization and learning. It has no clean algebraic
characterization; soundness holds only approximately, with the
approximation quality controlled by the temperature parameter~$\beta$.
\end{itemize}
Table~\ref{tab:accumulators} summarizes the key properties.

\begin{table}[t]
\centering
\caption{Accumulator families for the graph-operator layer.}
\label{tab:accumulators}
\small
\setlength{\tabcolsep}{5pt}
\renewcommand{\arraystretch}{1.2}
\begin{tabular}{@{}lcccl@{}}
  \toprule
  \multirow{2}{*}{\textbf{Family}}
    & \multirow{2}{*}{\textbf{Sound}}
    & \multirow{2}{*}{\textbf{Continuous}}
    & \textbf{Semiring} 
    & \multirow{2}{*}{\textbf{Interpretation}} \\
  &
  &
  &
  \textbf{-derived}
  & \\
  \midrule
  min-max
    & \checkmark & \checkmark & \checkmark
    & critical-neighbor margin \\
  Signed-deficit
    & \checkmark & \ding{55} & \ding{55}
    & count surplus/deficit \\
  Averaging
    & \ding{55}  & \checkmark & \ding{55}
    & mean neighbor quality \\
  Smooth/fuzzy
    & approx.    & \checkmark & \ding{55}
    & differentiable surrogate \\
  \bottomrule
\end{tabular}
\end{table}
\paragraph{Layer-wise composition.}
These algebraic choices may be applied uniformly across all layers, or
selected independently for temporal, graph, and multi-agent aggregation.
A uniform choice yields a consistent interpretation across time,
structure, and agents, while heterogeneous choices allow greater
flexibility, such as combining logical temporal reasoning with
count-based graph aggregation.
\section{Theoretical Results}
\subsection{Proof of Theorem 1}
\label{sec:proof-thm1}
\begin{proof}
Let $s,s' \in S$ with
\(
s \preceq_S s'.
\)
By definition of the pointwise order on $S$, this means
\(
s(j) \preceq_K s'(j)
\  \forall j \in \nodes.
\)

Fix an agent $i \in \nodes$ and a graph type $\type \in \settypes$.
Applying the restriction map to the eligible neighborhood
$N_i^{\type,\circ}(t,W)$ yields
\(
\mathrm{res}_i^\type(s)
\preceq
\mathrm{res}_i^\type(s')
\)
in the product order on $K^{N_i^{\type,\circ}(t,W)}$, since
restriction simply selects a subset of coordinates and is
order-preserving by assumption~(1).

By monotonicity of the fold map $\mathcal A_E$, we obtain
\[
\mathcal A_E\bigl(\mathrm{res}_i^\type(s)\bigr)
\preceq_M
\mathcal A_E\bigl(\mathrm{res}_i^\type(s')\bigr).
\]

Applying the monotone readout map $h : M \to K$ gives
\[
h\!\left(\mathcal A_E\bigl(\mathrm{res}_i^\type(s)\bigr)\right)
\preceq_K
h\!\left(\mathcal A_E\bigl(\mathrm{res}_i^\type(s')\bigr)\right).
\]

Thus, for every graph type $\type$, the corresponding per-type output is
order-preserved. Collecting these values over all graph types produces
two tuples in $K^{|\settypes|}$, say
\(
\bigl(x_\type\bigr)_{\type \in \settypes}
\preceq
\bigl(x'_\type\bigr)_{\type \in \settypes},
\)
again in the pointwise order. By monotonicity of the graph-type
quantifier $\mathcal Q_\#^K$, we conclude that
\[
T_i^\#(s)
=
\mathcal Q_\#^K\bigl((x_\type)_{\type \in \settypes}\bigr)
\preceq_K
\mathcal Q_\#^K\bigl((x'_\type)_{\type \in \settypes}\bigr)
=
T_i^\#(s').
\]
Hence the local graph operator is monotone.

If the same reasoning applies for every agent \(i \in \nodes\), then \( \forall i\),
\[
(T(s))(i)=T_i^\#(s) \preceq_K T_i^\#(s')=(T(s'))(i).
\]
Therefore
\(
T(s) \preceq_S T(s'),
\)
which proves monotonicity of the global operator \(T : S \to S\).
\end{proof}
\subsection{Proof of Proposition 1}
\label{sec:proof-prop-1}
\begin{proof}
Fix an agent $i$, time $t$, graph type $\type$, and let
\(
\mathbf r^\type
=
\bigl(\robt{j}{\trace}{\varphi}{t}\bigr)_{j \in \Nbrs_i^{\type,\circ}(t,W)}
\)
be the tuple of input values for that graph type.

By Condition (1),
\(
r_j^\type \succ \bot_K
\iff
\trace,j,t \models \varphi\). Hence, $c^+(\mathbf r^\type)
=
\bigl|\{j \in \Nbrs_i^{\type,\circ}(t,W) :
\trace,j,t \models \varphi\}\bigr|,$ i.e., the number of positive inputs is exactly the number of
satisfying neighbors.

Now apply Condition (2): $c^+(\mathbf r^\type) \in [e_1,e_2]
\iff
F_E(\mathbf r^\type) \succ \bot_K.$ Therefore, for each graph type $\type$, the quantitative output is
positive if and only if the Boolean count constraint is satisfied.

For existential graph-type quantification, the operator is $\mathcal{Q}_{\exists}^K = \bigoplus_{\type \in \settypes}.$ By sign-correctness of $\oplus$, $\bigoplus_{\type} x_\type \succ \bot_K
\iff
\exists \type \;:\; x_\type \succ \bot_K.$ Hence, $\robt{i}{\trace}{\InGO{\graph}{E}{W}{\exists}\varphi}{t} \succ \bot_K$ if and only if there exists a graph type $\type$ for which the
Boolean interval-count condition holds. This is exactly the Boolean
semantics of the existential graph-type quantifier.

We can similarly show the case for universal graph-type quantification using the $\bigotimes$ operator.
The argument for outgoing operators is identical, replacing
$\Nbrs_i^{\type,\mathrm{in}}$ by $\Nbrs_i^{\type,\mathrm{out}}$.
Thus the graph operators are sound and complete.
\end{proof}

\subsection{Proof for Lemma 1}
\label{sec:proof-lem1}
\begin{proof}
By definition of the multi-agent semantics,
\[
\robt{}{\trace}{\EX_{\nodes}\,\varphi}{t}
=
\bigoplus_{i \in \nodes}\robt{i}{\trace}{\varphi}{t},
\qquad
\robt{}{\trace}{\FA_{\nodes}\,\varphi}{t}
=
\bigotimes_{i \in \nodes}\robt{i}{\trace}{\varphi}{t}.
\]
Applying the assumed sign-correctness of finite $\oplus$- and
$\otimes$-aggregations yields
\[
\robt{}{\trace}{\EX_{\nodes}\,\varphi}{t} \succ \bot_K
\iff
\exists i \in \nodes:\;
\robt{i}{\trace}{\varphi}{t} \succ \bot_K,
\]
\[\robt{}{\trace}{\FA_{\nodes}\,\varphi}{t} \succ \bot_K
\iff
\forall i \in \nodes:\;
\robt{i}{\trace}{\varphi}{t} \succ \bot_K.
\]
If the local semantics of $\varphi$ are sound and complete, then
\[
\robt{i}{\trace}{\varphi}{t} \succ \bot_K
\Rightarrow
\trace,i,t \models \varphi,
\qquad
\trace,i,t \models \varphi
\Rightarrow
\robt{i}{\trace}{\varphi}{t} \succeq \bot_K.
\]
Combining these equivalences with the Boolean semantics of
$\EX_{\nodes}$ and $\FA_{\nodes}$ gives soundness and completeness of
the quantified formulas.
\end{proof}
\subsection{Proof for Theorem 2}\label{sec:proof-thm2}
\begin{proof}
    The proof follows by structural induction on the formula $\phi$.
\noindent\textit{Base case.}
If $\phi$ is an atomic predicate, the result holds by
assumption~(1).

\noindent\textit{Inductive step.}
Assume the claim holds for all immediate subformulas of $\phi$.

If $\phi$ is obtained from its subformulas using a Boolean or temporal
operator, then the result follows directly from assumption~(2), since
the temporal/Boolean fragment is sound and complete over $K$.

If $\phi$ is obtained using a graph operator, then by the inductive
hypothesis the inputs to the graph aggregation have the correct sign
with respect to the Boolean semantics of the subformula. Hence the
conditions of Proposition~\ref{prop:graph-sound} apply, and the graph
operator preserves soundness and completeness.

If $\phi$ is obtained by graph-type quantification or by multi-agent
quantification, the result follows from
Lemma~\ref{lem:quantifier-threshold}, since these quantifiers preserve
threshold semantics.
If $\phi$ is an agent embedding $i.\psi$, then
$
\robt{}{\trace}{i.\psi}{t}
=
\robt{i}{\trace}{\psi}{t}$,
and the result follows immediately from the inductive hypothesis.

Since all formation rules preserve soundness and completeness, the
result holds for all STL-GO formulas $\phi$.
\end{proof}
\subsection{Proof of Theorem~\ref{prop:complexity}}
\label{proof:prop-complexity}
    \begin{proof}
Algorithm~\ref{alg:monitor} performs a single bottom-up traversal of the parse
tree, visiting each of the $|\varphi|$ subformulae once, and for every
subformula computes robustness values for all $N$ agents and $T$ time steps. Predicate, Boolean, temporal, and quantifier nodes require only constant-time
algebraic operations per evaluation under the fixed-window assumption~(i),
yielding $O(NT)$ work per node. For a graph operator, each evaluation at agent $i$ and time $t$ iterates over
the $M$ graph types and folds the robustness values of the eligible neighbors
$\mathcal{N}^{\tau,\circ}_i(t,W)$, of which there are at most $\Delta$. By
assumptions~(ii) and~(iii), retrieving the neighbor set is free and each
accumulator update and readout is $O(1)$, so the fold costs $O(\Delta)$ per
graph type and $O(M\Delta)$ per $(i,t)$ pair. The graph-type quantifier
$Q^K_\#$ then combines the $M$ per-type results in $O(M)$, dominated by the
fold. Thus a graph-operator node costs $O(NTM\Delta)$.

\noindent Since graph-operator nodes dominate all other node types, summing over the
$|\varphi|$ subformulae gives total runtime $O(|\varphi|\,N\,T\,M\,\Delta)$.
\footnote{Since $\Delta \le N-1$ always, this gives the worst-case bound
$O(|\varphi|\,N^2\,T\,M)$; for sparse interaction graphs $\Delta \ll N$ and the cost is
substantially lower.}
\end{proof}
\subsection{General compositional soundness and completeness}
\begin{theorem}
\label{thm:general-compositional}
Let $K$ be a partially ordered set with threshold $\bot_K$, and define
the positive and non-positive regions
\(
K^+ := \{x \in K : x \succ \bot_K\},
\ \
K^- := \{x \in K : x \preceq \bot_K\}.
\)
Consider a compositional semantics in which each formula $\phi$ is
interpreted by a map
\(
\robt{}{\trace}{\phi}{t} \in K,
\)
constructed recursively from:
(i)atomic predicate valuations,
   (ii) a collection of operators
    \(
    F : K^n \to K
    \)
    applied to subformula values.
Assume:
\begin{enumerate}
    \item
    For every atomic predicate $\mu$,
    \(
    \robt{}{\trace}{\mu}{t} \in K^+
    \iff
    \trace,t \models \mu;
    \)

    \item
    for every semantic operator $F : K^n \to K$,
    there exists a Boolean operator
    \(
    f : \{0,1\}^n \to \{0,1\}
    \)
    such that for all inputs $x_1,\dots,x_n \in K$,
    \[
    F(x_1,\dots,x_n) \in K^+
    \iff
    f(\mathbf{1}_{x_1 \in K^+},\dots,\mathbf{1}_{x_n \in K^+}) = 1.
    \]
\end{enumerate}

Then for every formula $\phi$,
\(
\robt{}{\trace}{\phi}{t} \in K^+
\iff
\trace,t \models \phi.
\)
That is, the quantitative semantics are sound and complete with respect
to the Boolean semantics.
\end{theorem}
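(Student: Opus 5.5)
The argument is a structural induction on $\phi$, with the Boolean semantics read compositionally. We take the Boolean semantics of each formula constructor to be given by the Boolean operator $f$ that assumption~2 pairs with the semantic operator $F$. For example, $\wedge$ corresponds to $f=\min$ on $\{0,1\}$, $\neg$ to $1-x$, the bounded Until to the finite Boolean combination over the window, $\EX_\nodes/\FA_\nodes$ to finite disjunction/conjunction, and a graph operator with count interval $E$ to the threshold function ``the number of ones lies in $[e_1,e_2]$''. The cases with unbounded arity (temporal windows, neighborhoods, agent sets) are finite for any fixed trace, time, and agent. So each instance is an operator $F:K^n\to K$ for a concrete $n$, and assumption~2 applies to it. The invariant to prove, for every subformula $\psi$ and every evaluation point, is
\[
\mathbf{1}_{\robt{}{\trace}{\psi}{t}\in K^+} \;=\; \mathbf{1}_{\trace,t\models\psi}.
\]

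For the base case, $\psi$ is an atomic predicate $\mu$, and the invariant is exactly assumption~1.

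For the inductive step, suppose $\psi$ is built by an operator $F$ from immediate subformulas $\psi_1,\dots,\psi_n$, possibly evaluated at other times or agents. Write $x_k := \robt{}{\trace}{\psi_k}{\cdot}$. By the induction hypothesis, $\mathbf{1}_{x_k\in K^+} = \mathbf{1}_{\trace,\cdot\models\psi_k}$ for each $k$. Assumption~2 then gives
\[
\robt{}{\trace}{\psi}{t}\in K^+ \iff f(\mathbf{1}_{x_1\in K^+},\dots,\mathbf{1}_{x_n\in K^+})=1 .
\]
Substituting the induction hypothesis, the right-hand side is $f$ applied to the Boolean truth values of the subformulas. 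By the choice of $f$, that equals $1$ iff $\trace,t\models\psi$. This closes the induction. The agent embedding $i.\psi$ is the identity operator with $f=\mathrm{id}$, and it is handled the same way.

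The step I expect to need care is the bookkeeping rather than any inequality. Because the statement is phrased abstractly, I will state explicitly that the induction runs over formula--evaluation-point pairs $(\psi,i,t)$. The reason is that temporal and graph operators draw their arguments from other times and agents, so the hypothesis must be available at all of those points. Ordering by subformula depth and quantifying over all evaluation points handles this.

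One could worry that different syntactic constructors might share the same $F$ but need different $f$. This is harmless, because assumption~2 is applied per operator occurrence.

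Finally, I will note that the conclusion uses the strict region $K^+$, so both directions of soundness and completeness are stated with respect to $K^+$. This matches the hypotheses and requires no separate treatment of the boundary value $\bot_K$.
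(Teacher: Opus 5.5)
Your proposal is correct and follows the paper's proof. Both are a structural induction in which assumption~1 handles the atomic case and the inductive step combines the induction hypothesis with assumption~2. Reading $f$ as the constructor's Boolean semantics and quantifying the hypothesis over all agents and times just makes explicit what the paper leaves implicit in ``which is precisely the Boolean semantics of $\phi$''.
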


\begin{proof}
The proof is by structural induction on the formula $\phi$.

\noindent\textit{Base case.}
If $\phi$ is an atomic predicate, the claim holds by
assumption~(1).

\noindent\textit{Inductive step.}
Suppose $\phi = F(\phi_1,\dots,\phi_n)$ for some operator $F$.
By the induction hypothesis,
\(
\robt{}{\trace}{\phi_j}{t} \in K^+
\iff
\trace,t \models \phi_j
\quad \text{for all } j.
\)
Applying assumption~(2),
\[
\robt{}{\trace}{\phi}{t}
=
F\bigl(\robt{}{\trace}{\phi_1}{t},\dots,\robt{}{\trace}{\phi_n}{t}\bigr)
\in K^+
\]
if and only if
\(
f(\mathbf{1}_{\trace,t \models \phi_1},\dots,\mathbf{1}_{\trace,t \models \phi_n}) = 1,
\)
which is precisely the Boolean semantics of $\phi$.

Thus the claim holds for $\phi$, completing the induction.
\end{proof}
\section{Min-Max Robustness: Instantiation and Limitations}
\label{sec:minmax-robustness}

\noindent We now instantiate the three-tier algebraic semantics with the
min-max algebra $\MinMax$
and the min-max accumulator, giving concrete robustness
formulas for every operator in STL-GO.
We then examine a family of examples that expose a structural
limitation of the min-max accumulator and propose a
parameterized alternative.

\noindent \textit{Ego-centric STL-GO temporal operators.}
For agent $i$, trace $\trace$, and time $t$:
\begin{align}
  \robt{i}{\trace}{\mu}{t}
    &= \lbl(\Spmod_t, i, \mu),
  \notag\\
  \robt{i}{\trace}{\neg\varphi}{t}
    &= -\robt{i}{\trace}{\varphi}{t},
  \notag\\
  \robt{i}{\trace}{\varphi_1 \wedge \varphi_2}{t}
    &= \min\!\bigl(
         \robt{i}{\trace}{\varphi_1}{t},\;
         \robt{i}{\trace}{\varphi_2}{t}\bigr),
  \notag\\
  \robt{i}{\trace}{\varphi_1 \vee \varphi_2}{t}
    &= \max\!\bigl(
         \robt{i}{\trace}{\varphi_1}{t},\;
         \robt{i}{\trace}{\varphi_2}{t}\bigr),
  \notag\\
  \robt{i}{\trace}{\Evbd{t_1}{t_2}\,\varphi}{t}
    &= \max_{t' \in t+[t_1,t_2]}\;
       \robt{i}{\trace}{\varphi}{t'},
  \notag\\
  \robt{i}{\trace}{\Globbd{t_1}{t_2}\,\varphi}{t}
    &= \min_{t' \in t+[t_1,t_2]}\;
       \robt{i}{\trace}{\varphi}{t'},
  \notag\\
  \robt{i}{\trace}{\varphi_1\,\Untilbd{t_1}{t_2}\,\varphi_2}{t}
    &= \max_{t' \in t+[t_1,t_2]}\;
       \min\!\biggl(
         \robt{i}{\trace}{\varphi_2}{t'},\;
         \min_{t'' \in [t,\,t']}\,
         \robt{i}{\trace}{\varphi_1}{t''}
       \biggr).
  \label{eq:minmax-temporal}
\end{align}
The labelling function $\lbl(\Spmod_t, i, \mu)$ returns the
\emph{signed distance} of agent~$i$'s state from the predicate
boundary: positive if $\mu(\agentstate^i_t)$ holds, negative
otherwise, with magnitude proportional to the margin.

\noindent\textit{Graph operators.}
Fix agent $i$, time $t$, graph type $\type$, and let
$N = \Nbrs_i^{\type,\mathrm{in}}(t, W)$ be the set of qualifying
incoming neighbors (those connected to~$i$ by an edge with weight
in~$W$).
Let $r_j = \robt{j}{\trace}{\varphi}{t}$ for each $j \in N$, and
sort these values as $r_{(1)} \geq \cdots \geq r_{(|N|)}$, with
the convention $r_{(k)} = -\infty$ for $k > |N|$.
Then:
\begin{align}
  \robt{i}{\trace}{\InGO{\graph}{E}{W}{\exists}\varphi}{t}
  &= \max_{\type:\,\graph^{\type}\in\graphs_t}\;
     \min\!\bigl(r^{\type}_{(e_1)},\;
                 -r^{\type}_{(e_2+1)}\bigr),
  \label{eq:minmax-in-exist}\\[3pt]
  \robt{i}{\trace}{\InGO{\graph}{E}{W}{\forall}\varphi}{t}
  &= \min_{\type:\,\graph^{\type}\in\graphs_t}\;
     \min\!\bigl(r^{\type}_{(e_1)},\;
                 -r^{\type}_{(e_2+1)}\bigr).
  \label{eq:minmax-in-univ}
\end{align}
The outgoing operators
$\OutGO{\graph}{E}{W}{\#}$ are identical with
$\Nbrs_i^{\type,\mathrm{out}}$ in place of
$\Nbrs_i^{\type,\mathrm{in}}$.

\noindent \textit{Multi-agent operators.}
\begin{align}
  \robt{}{\trace}{\FA_\nodes\,\varphi}{t}
  &= \min_{i \in \nodes}\;
     \robt{i}{\trace}{\varphi}{t},
  &
  \robt{}{\trace}{\EX_\nodes\,\varphi}{t}
  &= \max_{i \in \nodes}\;
     \robt{i}{\trace}{\varphi}{t}.
  \label{eq:minmax-multiagent}
\end{align}
The remaining multi-agent temporal operators ($\neg$, $\wedge$,
$\vee$, $\Untilbd{t_1}{t_2}$) follow the same min/max structure
as the STL-GO operators.

\subsection*{Limitations}
The min-max accumulator reports a single value: the margin
of the $k$-th best neighbor, regardless of \emph{how many}
neighbors are satisfying or violating.  This makes it insensitive
to the \emph{count deficit}: configurations that differ
dramatically in the number of violating neighbors can receive
identical robustness.

\noindent\textit{Example.}Fix $|N| = 6$ neighbors and the threshold $k = e_1 = 5$
(``at least 5 neighbors satisfy~$\psi$'').
Consider the neighbor robustness multisets in
Table~\ref{tab:configurations}:
$S_1$ has all six neighbors satisfying robustly;
$S_2$ has exactly five at margin $+10$;
$S_3$ also has five satisfying, but the 5th-best has margin only
$+0.5$ (a much more fragile satisfaction);
$S_4$--$S_6$ have $4, 3, 2$ satisfying neighbors
respectively, each at $\pm 10$.

\begin{table}[t]
\caption{Neighbor robustness configurations
  ($k{=}5$, $|N|{=}6$).}
\label{tab:configurations}
\begin{minipage}[t]{0.52\textwidth}
\vspace{0pt}
\small
\renewcommand{\arraystretch}{1.15}
\begin{tabular}{@{}cl@{}}
  \toprule
  & \textbf{Neighbor robustness values} \\
  \midrule
  $S_1$ & $\{+10,\;+10,\;+10,\;+10,\;+10,\;+10\}$ \\
  $S_2$ & $\{+10,\;+10,\;+10,\;+10,\;+10,\;-10\}$ \\
  $S_3$ & $\{+10,\;+10,\;+10,\;+10,\;+0.5,\;-10\}$ \\
  $S_4$ & $\{+10,\;+10,\;+10,\;+10,\;-10,\;-10\}$ \\
  $S_5$ & $\{+10,\;+10,\;+10,\;-10,\;-10,\;-10\}$ \\
  $S_6$ & $\{+10,\;+10,\;-10,\;-10,\;-10,\;-10\}$ \\
  \bottomrule
\end{tabular}
\end{minipage}%
\hfill
\begin{minipage}[t]{0.44\textwidth}
\vspace{0pt}
\small
\textit{$S_1$--$S_3$ satisfy the constraint
($c^+ \geq 5$); $S_4$--$S_6$ violate it ($c^+ < 5$).
Ideally, the robustness should be:}
\begin{itemize}
  \item \textit{positive for $S_1$--$S_3$, with
    $S_1 > S_2 > S_3 > 0$;}
  \item \textit{negative and decreasingly so from $S_4$
    through~$S_6$, reflecting the growing count deficit.}
\end{itemize}
\end{minipage}
\end{table}
\noindent Under the (i) \textbf{min-max accumulator}, $S_1$ and
$S_2$ both receive $+10$: the accumulator cannot distinguish
having one spare neighbor from having zero.
$S_3$ receives $+0.5$, correctly reflecting its fragile 5th
neighbor, but $S_4$ through~$S_6$ all receive $-10$: a
configuration missing one neighbor is indistinguishable from one
missing five.
The order statistic captures \emph{how robustly} the critical
neighbor satisfies~$\psi$, but is entirely blind to
\emph{how many} neighbors are satisfying or missing.
Under the (ii) \textbf{signed-deficit accumulator} ($c^+ - k$),
the values +1, 0, 0, -1, -2, -3 provide a monotone
gradation reflecting the count surplus or deficit
(for this example $e_2 = \infty$, so $\mathcal{A}^{\mathrm{cd}}$
reduces to the single-sided $c - e_1$).
However, $S_2$ and $S_3$ both receive exactly~$0$ despite $S_2$
being far more robust than~$S_3$---the signed deficit discards
all signal-margin information.
We propose a (iii) \textbf{hybrid accumulator}:
$r_{(k)} + \alpha\,(c^+ - k)$, combining the order statistic
for signal-level margin with the scaled count deficit for
sensitivity to the number of satisfying neighbors.
The parameter~$\alpha > 0$ controls the \emph{exchange rate}
between signal margin (in robustness units) and count margin
(in number of agents).
As shown in Table~\ref{tab:accumulator-comparison}:
(a) $S_1 > S_2 > S_3 > 0$ for all $\alpha > 0$: the surplus
    neighbor in $S_1$ is reflected as a bonus of~$+\alpha$,
    and $S_2$ vs.\ $S_3$ are distinguished by the critical
    neighbor's margin ($+10$ vs.\ $+0.5$).
  (b) $S_4$ through $S_6$ are strictly decreasing, with 
    the spacing growing with $\alpha$: larger $\alpha$ amplifies
    the count-deficit penalty.
  (c) Within a fixed deficit level, the min-max term
    $r_{(k)}$ still distinguishes configurations by the margin
    of the critical neighbor.

\begin{table}[t]
\centering
\caption{Accumulator comparison
  ($k{=}5$, $|N|{=}6$).
  $\checkmark$: satisfies;
  $\times$: violates.}
\label{tab:accumulator-comparison}
\small
\renewcommand{\arraystretch}{1.15}
\begin{tabular}{@{}cc r c c w{r}{3em}w{r}{3em}w{r}{3em}@{}}
  \toprule
  \textbf{Set}
  & \textbf{Bool.}
  & $c^+$
  & \textbf{Min-max.}
  & \textbf{Sgn-def.}
  & \multicolumn{3}{c}{\textbf{Hybrid}\; $r_{(k)} + \alpha(c^+{-}k)$} \\
  \cmidrule(lr){4-4}
  \cmidrule(lr){5-5}
  \cmidrule(lr){6-8}
  & & & $r_{(5)}$ & $c^+{-}k$
  & $\alpha{=}1$ & $\alpha{=}5$ & $\alpha{=}10$ \\
  \midrule
  $S_1$ & $\checkmark$ & 6
    & $+10$  & $+1$            & $+11$   & $+15$   & $+20$ \\
  $S_2$ & $\checkmark$ & 5
    & $+10$  & $\phantom{+}0$  & $+10$   & $+10$   & $+10$ \\
  $S_3$ & $\checkmark$ & 5
    & $+0.5$ & $\phantom{+}0$  & $+0.5$  & $+0.5$  & $+0.5$ \\
  \midrule
  $S_4$ & $\times$ & 4
    & $-10$  & $-1$            & $-11$   & $-15$   & $-20$ \\
  $S_5$ & $\times$ & 3
    & $-10$  & $-2$            & $-12$   & $-20$   & $-30$ \\
  $S_6$ & $\times$ & 2
    & $-10$  & $-3$            & $-13$   & $-25$   & $-40$ \\
  \bottomrule
\end{tabular}
\end{table}
\end{document}